\newtheorem{thm}{Theorem}[section]
\newtheorem{lem}[thm]{Lemma}
\newtheorem{cor}[thm]{Corollary}
\theoremstyle{definition}
\newcommand{\R}{\mathbb R}
\newcommand{\T}{\mathbf T}
\newcommand{\A}{\mathcal A}
\newcommand{\U}{\mathcal U}
\title[Identifiability of tree-child phylogenetic networks]{Identifiability of tree-child phylogenetic networks under a
	probabilistic recombination-mutation model of evolution} 
\author{Andrew Francis and Vincent Moulton}
\address{Centre for Research in Mathematics, Western Sydney University, Sydney, Australia}
\email{a.francis@westernsydney.edu.au}
\address{School of Computing Sciences, University of East Anglia, Norwich, UK}
\email{v.moulton@uea.ac.uk}
\date{\today}
\begin{document}
\begin{abstract}
Phylogenetic networks are an extension of phylogenetic trees which are used to 
represent evolutionary histories in which reticulation events 
(such as recombination and hybridization) have occurred. A central question
for such networks is that of \emph{identifiability}, which essentially asks 
under what circumstances can we reliably 
identify the phylogenetic network that gave rise to the observed data?
Recently, identifiability results have appeared for networks
relative to a model of sequence evolution that generalizes the 
standard Markov models used for phylogenetic trees. However, these results 
are quite limited in terms of the complexity of the networks that are considered.
In this paper, by introducing an alternative probabilistic model for
evolution along a network that is based on some ground-breaking work
by Thatte for pedigrees, we are able to obtain an identifiability result for a much larger class of 
phylogenetic networks (essentially the class of so-called 
tree-child networks). To prove our main theorem, we derive some
new results for identifying tree-child networks combinatorially, and then
adapt some techniques developed by Thatte for pedigrees  
to show that our combinatorial results imply identifiability in the probabilistic setting. 
We hope that the introduction of our new model for networks
could lead to new approaches to reliably construct phylogenetic networks.  
\end{abstract}

\maketitle

\section{Introduction}

Recently, there has been growing interest in the construction of phylogenetic
networks in order to represent the evolutionary history of a given set 
of species or taxa \cite{bapteste2013networks}. 
Phylogenetic networks are a generalization of phylogenetic trees,
and they have the advantage of being able to represent evolutionary events such
as recombination and hybridization that is not possible within a single tree. Various
approaches have been developed for constructing networks 
\cite{gusfield2014recombinatorics,huson2010phylogenetic}, and more recently
the use of probabilistic approaches for this purpose has started to gain momentum.

One of the key issues that arises when applying probabilistic models in 
phylogenetics is that of identifiability: under what circumstances can we reliably 
identify the phylogenetic tree or network that gave rise to the observed data?  
Typically, as is the case in this paper, the observed data is a multiple alignment of 
sequences across a set of taxa, which correspond to the leaves of the tree or network.   
This identifiability problem has been extensively studied 
for phylogenetic trees where identifiability has been proven
for simple models  some time ago (see e.g. \cite{chang1996full,steel1998reconstructing} as well as  \cite{rhodes2012identifiability} for an overview of some more recent developments), but 
relatively little is known for more general networks.

Identifiability results for phylogenetic networks come with two riders: the model 
of evolution considered on the network; and the class of networks considered.  
Most studies use a model of evolution on a rooted binary phylogenetic 
network, in which characters evolve along arcs, copy themselves at tree vertices, and 
make a random choice at reticulation vertices \cite{nakhleh2010evolutionary}. 
Under this model of evolution, identifiability results are known for a limited set of families of networks.  
For instance~\cite{gross2017distinguishing} have shown that under such a model, 
networks with a single cycle of length greater than or equal to 4 are identifiable. 
Related network-based models consider evolution along the trees that are contained within a network
and take into account processes such as incomplete lineage
sorting \cite{yu2014maximum},  but identifiability of
these models is complicated by the fact that the trees displayed in a network do not 
necessarily identify the network.

In this paper we consider a different evolutionary model, which we adapt from 
the world of pedigree reconstruction~\cite{thatte2013reconstructing}.  In this model, which
we illustrate in Figure~\ref{f:markov},
first a tree is selected at random from the set of trees displayed  by a  
network, and a standard model of evolution 
on that tree (see e.g. \cite[Chapter 13]{felsenstein2004inferring})  is used
to generate character values at sites until the tree changes.  At each site and for each reticulation 
vertex there is a fixed (small) probability $p$ that the parent of the vertex will 
switch.  For networks whose displayed trees have leaf-set equal to that of the network, this 
generates a Markov process whose state space is the set of 
displayed trees of the network, and means that an alignment will have 
blocks of sites generated under a common tree, before a change produces 
another block of sites generated under a new tree.  Related approaches have been 
considered in the literature for constructing networks (known as 
ancestral recombination graphs) from an alignment of
recombining sequences \cite{song2005constructing}, and also for inferring break-points in 
such alignments using the so-called multiple changepoint model \cite{suchard2003inferring}.

\begin{figure}[ht]
	\includegraphics{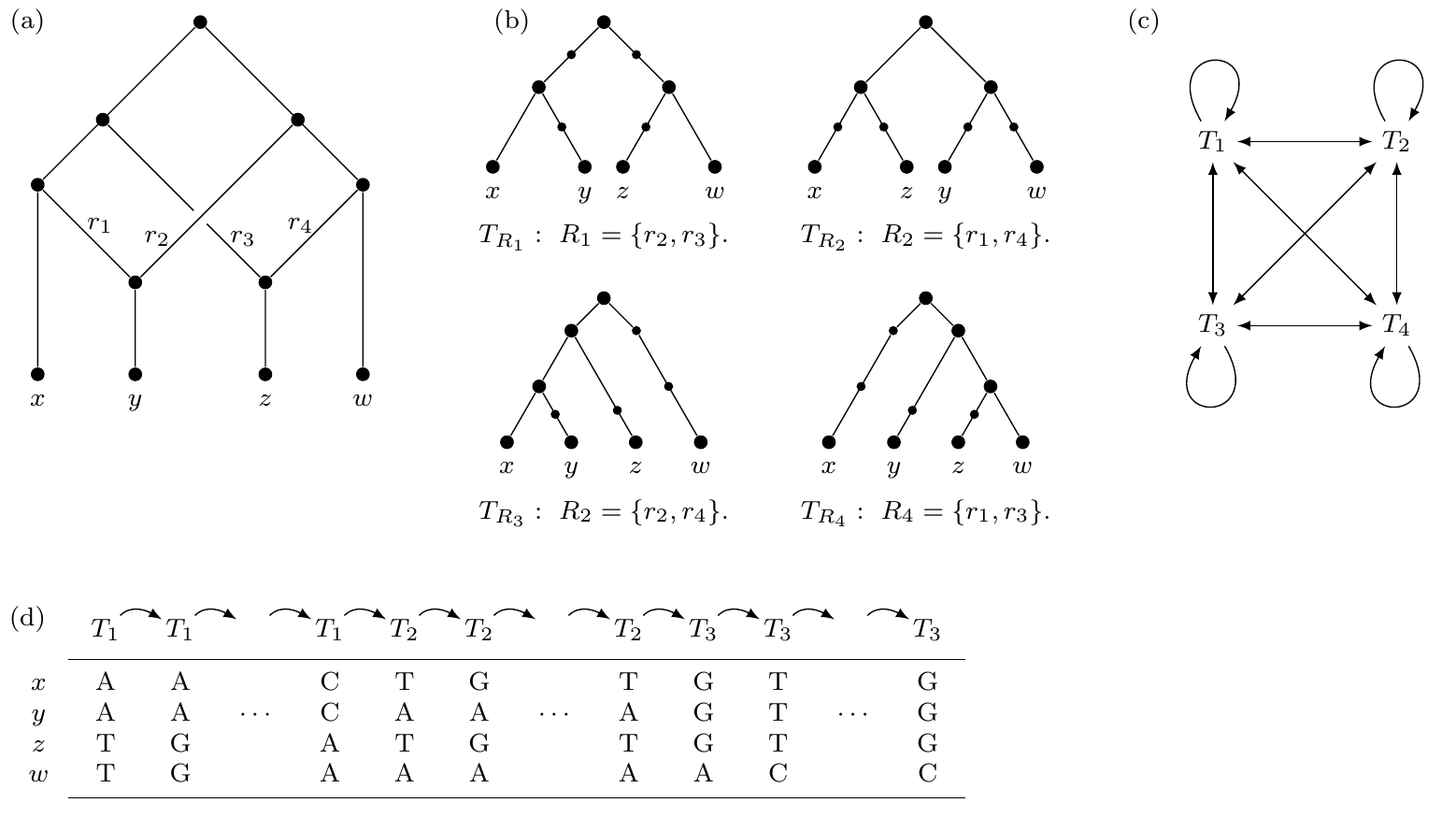}
	\caption{(a) A phylogenetic network with two reticulations and reticulation arcs labelled $r_i$ (Section~\ref{s:networks}).
		All arcs are directed downwards and away from the root. (b) The four rooted trees $T_{R_i}$ in the network corresponding to choices of sets of reticulation arcs $R_i$ (Section~\ref{s:tcn}). (c)~The Markov model on the four trees allows movement between any pair of trees by changing choice of $R_i$ (abbreviating $T_{R_i}$ to $T_i$). This is introduced in Section~\ref{s:evol.on.tree}. (d) An alignment generated by the 
	sequence of trees  $T_1\to T_1\to\dots\to T_1\to T_2\to T_2\to\dots\to T_2\to T_3\to T_3\to \dots \to T_3$ 
    (Section~\ref{s:evol.on.tree}).}\label{f:markov}
\end{figure}

Under our model of evolution along a phylogenetic network, we are able to obtain 
an identifiability result for a much larger class of networks than has been possible before.
In particular, in our main result, Corollary~\ref{second},  we show that
it is possible identify any network
within the class of tree-child networks, all of which have same number of reticulation vertices, 
and such that none of them has a reticulation vertex adjacent to the root (see Section~\ref{eq:AT} for
the definition of these terms).  Whereas for the model 
used in \cite{gross2017distinguishing} network identifiability has only been shown to hold 
for the case where there is a single reticulation, the networks that we consider 
can have any number of reticulations (if the number of leaves is allowed to grow).  

We now summarize the rest of the paper.
We begin with a section defining the key terms that will be used throughout 
the paper (Section~\ref{s:prelims}).   Section~\ref{s:tcn} provides key results on 
tree-child networks that we will need, some of them new.  In particular we prove 
that the number of non-isomorphic ``embedded spanning trees'' in a tree-child 
network is 2 raised to power of the number of reticulation vertices in the network (Theorem~\ref{tight}), and 
that if two tree-child networks have the same set of embedded spanning trees, then 
they are isomorphic (Theorem~\ref{isomorphic}).
In Section~\ref{s:models} we introduce the model of evolution that we will study on 
a network, based on that of~\cite{thatte2013reconstructing} for pedigrees, and 
we adapt the key results of~\cite{thatte2013reconstructing} for the setting of
rooted binary phylogenetic networks.  Our main result is contained in 
Section~\ref{s:main.results}, and it is based on a result which states that if the  distributions of the 
characters arising on certain networks are the same then, for sufficiently 
long alignments and a certain choice of model parameter, 
the networks must contain the same set of embedded spanning trees 
(Corollary~\ref{first}). This, in turn, is a direct corollary of  
a technical result (Theorem~\ref{t:main.result}) whose proof employs
a similar strategy to that used in the proof of  
\cite[Theorem 2]{thatte2013reconstructing}. We conclude with a short 
discussion on possible future directions.

\section{Preliminaries/Definitions}\label{s:prelims}

\subsection{Trees and forests}

In what follows $X$ is a finite set (corresponding to a set of taxa).

A \emph{forest} is a graph with no cycles; a \emph{tree} is a forest with one
connected component. A \emph{leaf} in a forest is a degree 1 vertex.
A \emph{rooted tree} is a tree 
with one vertex identified called the root, and all arcs directed away from the root towards the leaves. 
Note that if the root has out-degree 1, then we do not regard it as being a leaf of the tree.

Following \cite[Definitions 4 and 6]{thatte2013reconstructing}, we define 
an \emph{$X$-forest} to be a forest with leaf-set $X$, 
and say that two $X$-forests $F_1$ and $F_2$ are isomorphic if there is a graph 
isomorphism between $F_1$ and $F_2$ which is the identity on  $X$.
An {\em $X$-tree $T$}  is an $X$-forest 
with one component such that all internal vertices of $T$ have 
degree either 2 or 3.  Note, that $X$-forests are unrooted.
Moreover, it is important to note that an $X$-forest (or $X$-tree) may contain  
vertices of degree 2 that are not contained in $X$, and so the term is used in a 
slightly different way from that commonly used in the phylogenetics literature.

\subsection{Phylogenetic networks}\label{s:networks}

For networks we follow the definitions presented in \cite{bordewich2016determining}.

A \emph{phylogenetic network} $N$ on $X$ is a directed acyclic graph with the following properties:
(i) it has a unique vertex of in-degree zero called the \emph{root}, which has out-degree 
two (except in the case $|X| = 1$), (ii) the set $X$ labels the set of vertices of out-degree zero, 
each of which has in-degree one, and (iii) every other vertex either has in-degree one and out-degree 
two, or in-degree two and out-degree one. For technical reasons, in 
case $|X|=1$, then $N$ consists of the single vertex in X.

We denote the set of arcs in a phylogenetic network $N$ by $A(N)$.
The vertices of out-degree zero are called \emph{leaves}, while the vertices of in-degree 
one and out-degree two are called \emph{tree vertices} and the vertices of in-degree two and out-degree one 
are called \emph{reticulations}. The arcs directed into a reticulation are called 
\emph{reticulation arcs}; all other arcs are called \emph{tree arcs}.  
We let $r(N)$ denote the number of reticulations in $N$.
We say that two phylogenetic networks $N_1$ and $N_2$ are \emph{isomorphic} 
if there exists a directed graph isomorphism between $N_1$ and $N_2$
which is the identity when restricted to $X$.

For any two vertices $u$ and $v$ in $N$ that are joined by an arc $(u,v)$, we say 
$u$ is a \emph{parent} (or parent vertex) of $v$ and, conversely, $v$ is a \emph{child} (or child vertex ) of $u$. 
We say that $N$ is a \emph{tree-child network} if every non-leaf vertex has a child 
which is either a tree vertex or a leaf~\cite{cardona2009comparison}. 

\section{Tree child networks and embedded spanning trees}\label{s:tcn}

Given a phylogenetic network $N$, we can obtain a rooted tree from $N$ by removing one of the two 
reticulation arcs incident to each one of the reticulations in $N$. If $R$ denotes
a set of reticulation arcs removed in this way, then we let $T_R$ denote this tree.  
We let $\mathcal R_N$ denote the set of all possible such sets $R$ 
(so that $|\mathcal R_N|= 2^{r(N)}$).
Note that the vertex set of $T_R$ contains $X$ and it may potentially contain 
degree two vertices, as well as leaves that are not contained in $X$. 

Given a network $N$, we say that a tree whose vertex set contains $X$ is 
an \emph{embedded spanning tree} in $N$ if
it is isomorphic to the (unrooted) tree which is obtained from 
$T_R$ for some $R \in \mathcal{R}_N$, by ignoring directions on arcs,
via an isomorphism of trees which is the identity on $X$. 
We denote the set of all possible embedded spanning trees in $N$ (up to
isomorphism) by $S(N)$. Clearly $1 \le |S(N)| \le 2^{r(N)}$.  An example is shown in Figure~\ref{f:embedded.unrooted}.

\begin{figure}[ht]
\includegraphics{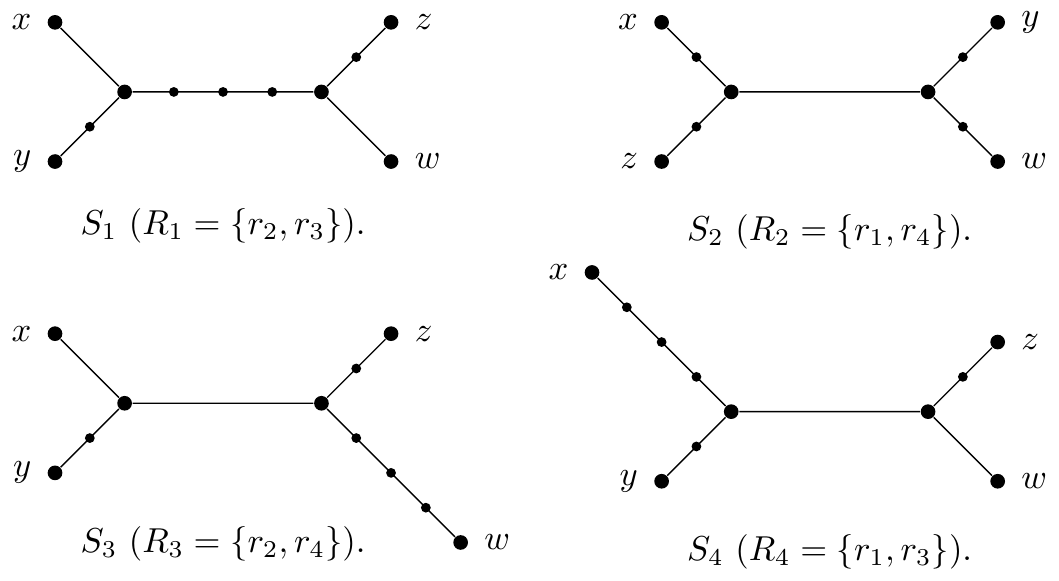}
\caption{The set of embedded spanning trees $S(N)=\{S_1,S_2,S_3,S_4\}$ for the network shown in Figure~\ref{f:markov}, and corresponding to the rooted analogues $T_{R_1},\dots,T_{R_4}$.}
\label{f:embedded.unrooted}
\end{figure}

Later, we shall focus on tree-child networks.
Note that an embedded spanning tree in a tree-child network $N$ on $X$
may not necessarily be an $X$-tree. We now characterize those
tree-child networks for which every embedded spanning tree is an $X$-tree.

\begin{lem}\label{special}
	Suppose that $N$ is a tree-child network. Then every element in $S(N)$ is 
	an $X$-tree if and only if there does not exist an arc $(\rho,v)$ in 
	$N$ with $\rho$ the root of $N$ and $v$ a reticulation of $N$.
\end{lem}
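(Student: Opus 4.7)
The plan is to prove both implications by a vertex-by-vertex degree count in the unrooted tree underlying $T_R$, since $T_R$ is an $X$-tree exactly when its set of degree-$1$ vertices equals $X$ and every other vertex has degree $2$ or $3$. The deletion that produces $T_R$ from $N$ removes exactly one in-arc at each reticulation, so the only arcs that vanish are reticulation arcs; this is the observation that drives every case below.

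For the forward direction I would argue the contrapositive. If $N$ contains an arc $(\rho,v)$ with $v$ a reticulation, pick any $R \in \mathcal{R}_N$ that contains $(\rho,v)$ (breaking the choice at the remaining reticulations arbitrarily). In $T_R$ the root $\rho$ keeps only its other outgoing arc, so as an unrooted tree $\rho$ is a degree-$1$ vertex that does not lie in $X$, and hence this element of $S(N)$ is not an $X$-tree.

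For the backward direction I would assume $N$ has no arc from $\rho$ to a reticulation and check the degree of each vertex of $N$ in $T_R$, for an arbitrary $R \in \mathcal{R}_N$. Leaves in $X$ are untouched, since no reticulation arc ends at a leaf, so they remain degree-$1$. Each reticulation loses exactly one in-arc and keeps the other together with its unique out-arc, giving degree $2$. The real content lives in the remaining two cases, and in both it is the tree-child property that saves the proof: it forces every tree vertex to have at most one reticulation child (so at most one of its two out-arcs is deleted, leaving degree $2$ or $3$, while its in-arc survives because it enters a non-reticulation), and it forces the child of any reticulation to be a tree vertex or a leaf (so the out-arc of a reticulation is never a reticulation arc, which I have already used). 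Finally, the hypothesis on $\rho$ rules out the same pathology at the root: neither child of $\rho$ is a reticulation, so both of its out-arcs survive and $\rho$ has degree $2$ in $T_R$.

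I do not expect a genuine obstacle here, only the bookkeeping discipline of applying tree-childness in exactly the two places where it is needed. The only vertex type whose degree can still drop to $1$ under the deletion, even inside a tree-child network, is the root when it is adjacent to a reticulation, and this is precisely the configuration the lemma singles out.
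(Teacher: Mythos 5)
Your proposal is correct and follows essentially the same route as the paper: the forward direction is the identical observation that removing $(\rho,v)$ leaves the root as a degree-one vertex outside $X$, and your backward direction is the same case analysis on vertex types (root, reticulation, tree vertex), invoking the tree-child property in exactly the same two places, merely organized as a direct degree count rather than the paper's contradiction argument about a hypothetical leaf $w\notin X$. No gaps.
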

\begin{proof} 
Suppose that every element in $S(N)$ is 
an $X$-tree. Suppose $N$ contains an arc $(\rho,v)$ 
with $\rho$ the root of $N$ and $v$ a reticulation of $N$. 
If $R \in \mathcal{R}_N$ with $(\rho,v) \in R$,  
the underlying (undirected) tree of $T_R$ is a 
tree whose vertex set contains $X$ 
with a leaf (corresponding to $\rho$) that is not in $X$, a contradiction.

Conversely, suppose there does not exist an arc $(\rho,v)$ in 
$N$ with $\rho$ the root of $N$ and $v$ a reticulation of $N$.
Let $R \in \mathcal{R}_N$, and suppose that the 
embedded spanning tree arising from $T_R$ contains
a leaf $w$ that is not in $X$. 

Note first that $w$ is not the root of $N$, since otherwise there 
would be an arc $(w,v)$ in $N$ with $v$ a reticulation of $N$, which
is contradiction. So, as $w$ is not in $X$, it therefore follows that $w$ is either a reticulation 
or a tree-vertex. But $w$ cannot be a reticulation since
then there would have to be a reticulation $v$ with $(w,v)$ 
an arc in $N$, which contradicts $N$ being tree-child.  
Similarly, $w$ cannot be a tree-vertex, as then to have $w$ being a 
leaf in $T_R$, both of the children of $w$ in $N$ would
have to be reticulation vertices, which again contradicts $N$ being tree-child. 
This final contradiction completes the proof of the lemma.
\end{proof}

In the following 
we will use two operations on phylogenetic networks 
as defined in  \cite{bordewich2016determining}. 
Let $N$ be a phylogenetic network on $X$. 
A 2-element subset $\{x, y\}$ of $X$ 
is a \emph{cherry} in $N$ if the parents of $x$ and $y$ are the same.
A {\em cherry reduction} on a cherry $\{x, y\}$ in $N$ is the operation of deleting 
$x$ and $y$, and their incident arcs, and labelling their common parent 
(now itself a leaf) with a new element not in $X$. Note that
after a cherry reduction the number of leaves in the resulting 
network is reduced by one, but the number of reticulations is unchanged.

A two-element subset $\{x, y\}$ of $X$ is called a \emph{reticulated 
cherry} in $N$ if there is an undirected path, say $x, v_1, v_2, y$, 
between $x$ and $y$ in $N$ with one of $v_1$ and $v_2$ a tree vertex and 
the other a reticulation vertex. A {\em reticulated cherry reduction} 
on a reticulated cherry $\{x, y\}$ in $N$
is the operation of deleting the reticulation arc of the reticulated cherry 
and suppressing the degree-two vertices resulting from the deletion. 
Note that after a reticulated cherry reduction, the number of reticulations in the resulting 
network is reduced by one, but the leaf set is unchanged. 

The following result, that will be key to us, is shown in \cite{bordewich2016determining}. 

\begin{thm}[\cite{bordewich2016determining}]\label{t:cherries}
If $N$ is tree-child network on $X$, then the following hold:
\begin{enumerate}
	\item[(i)] If $|X| \ge 2$, then $N$ contains either a cherry or a reticulated cherry.
	\item[(ii)] If $N'$ is obtained from $N$ by reducing either a cherry or a reticulated
cherry, then $N'$ is a tree-child network.
\end{enumerate}
\end{thm}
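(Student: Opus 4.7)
The plan is to establish (i) by a height-minimization argument at the bottom of the network, exploiting the tree-child property to force a cherry or reticulated cherry, and to establish (ii) by a direct local check that the tree-child condition survives each reduction.

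For (i), assume $|X|\ge 2$ and write $h(u)$ for the length of a longest directed path from $u$ to a leaf (so leaves have $h=0$). If $r(N)=0$ then $N$ is a rooted binary tree on $X$ and contains a cherry: any internal vertex of minimal height has two leaf children. So suppose $r(N)\ge 1$. I first want to argue that $N$ has at least one \emph{non-root} tree vertex. Each parent of a reticulation must have out-degree $2$, for otherwise the tree-child condition at that parent would fail (its only child would be the reticulation); hence each parent of a reticulation is either the root or a tree vertex, and because the underlying digraph is simple, the two parents of a single reticulation cannot both be the root. Among all non-root tree vertices I then pick $v$ with $h(v)$ minimal. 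By this minimality no child of $v$ is a tree vertex, so each child of $v$ is either a leaf or a reticulation; the tree-child property at $v$ rules out both children being reticulations.

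If both children of $v$ are leaves, these leaves form a cherry. Otherwise one child is a leaf $y$ and the other is a reticulation $r$. The tree-child property at $r$ forces the unique child of $r$ to be a tree vertex or a leaf, and the minimality of $h(v)$ precludes the tree-vertex option, so $r$'s child is a leaf $z$. The undirected path $y,v,r,z$ then exhibits $\{y,z\}$ as a reticulated cherry. The main obstacle worth flagging is the initial existence step: one has to check the simple-digraph/parallel-edge point carefully and handle the corner case $|X|=2$ (where $N$ could already be just two leaves hanging off the root) separately, since without a non-root tree vertex the minimization has nothing to pick.

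For (ii), I verify the tree-child condition at every non-leaf vertex of $N'$ whose incidences differ from $N$. In a cherry reduction on $\{x,y\}$ with common parent $p$, the only vertex whose type changes is $p$ itself (from a tree vertex to a leaf), and the only vertex whose list of children changes is the parent of $p$ in $N$, which now sees $p$ as a leaf rather than as a tree vertex; a leaf child still satisfies the tree-child requirement. In a reticulated cherry reduction on a path $x,v_1,v_2,y$ with $v_1$ a tree vertex and $v_2$ a reticulation, deleting the arc $(v_1,v_2)$ and suppressing the two resulting degree-two vertices rewires the parent of $v_1$ to have $x$ as a direct child and rewires the remaining parent of $v_2$ to have $y$ as a direct child; both $x$ and $y$ are leaves, so the tree-child condition at each affected vertex is trivially preserved, while every other vertex of $N'$ inherits its tree-child witness unchanged from $N$.
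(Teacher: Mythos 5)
Your proof is correct, but note that the paper does not prove this statement at all: Theorem~\ref{t:cherries} is imported verbatim from \cite{bordewich2016determining}, so there is no in-paper argument to compare against. Your part (i) is the standard self-contained argument: the observation that every parent of a reticulation has out-degree $2$ (hence is the root or a tree vertex, and not both parents can be the root) guarantees a non-root tree vertex whenever $r(N)\ge 1$, and minimizing the height $h$ over such vertices correctly forces either two leaf children (a cherry) or a leaf child plus a reticulation child whose own child must, again by minimality and the tree-child condition, be a leaf (a reticulated cherry); the tree case $r(N)=0$ and the check that the two leaves obtained are distinct elements of $X$ are both handled. Your part (ii) correctly isolates the only vertices whose child-lists or types change under each reduction and observes that in every case the affected vertex acquires a leaf child, which trivially witnesses the tree-child condition; the only cosmetic omission is an explicit remark that the reduced graph still satisfies the degree conditions of a phylogenetic network (including the degenerate single-vertex network when a cherry reduction is applied with $|X|=2$), but this is immediate from the definitions of the two reductions.
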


Note that using this theorem it is straight-forward to check that in case 
$|X|=2$, then if $X=\{x,y\}$ a tree-child network on $X$ must be isomorphic to
one of the two networks in Figure~\ref{f:tcn.2leaves}.
 
\begin{figure}[ht]
\includegraphics[]{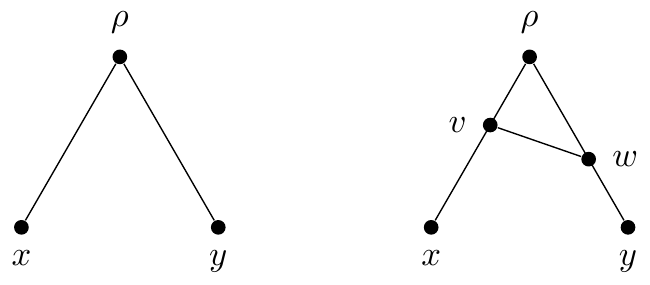}
\caption{The two tree-child networks with two leaves.}
\label{f:tcn.2leaves}
\end{figure}

We now prove that if $N$ is tree-child then $S(N)$ must be as large as is
possible for a network.

\begin{thm}\label{tight}
	Suppose that $N$ is a tree-child network. Then $|S(N)| = 2^{r(N)}$.
\end{thm}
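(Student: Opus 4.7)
My plan is to induct on the quantity $|X| + r(N)$, with the base case $|X| = 1$ being trivial: $N$ is a single vertex, $r(N) = 0$, and $|S(N)| = 1 = 2^0$. For the inductive step, $|X| \ge 2$, so I can apply Theorem~\ref{t:cherries}(i) to obtain either a cherry or a reticulated cherry in $N$, and reduce.

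Suppose first that $N$ has a cherry $\{x, y\}$. Let $N'$ denote its cherry reduction, so $r(N') = r(N)$ and $|X'| = |X| - 1$. The set of reticulations and reticulation arcs coincide in $N$ and $N'$, giving a canonical bijection $\mathcal{R}_N \leftrightarrow \mathcal{R}_{N'}$. I would then show that for $R_1, R_2 \in \mathcal{R}_N$ corresponding to $R_1', R_2' \in \mathcal{R}_{N'}$, the embedded spanning trees $T_{R_1}, T_{R_2}$ are isomorphic via the identity on $X$ if and only if $T_{R_1'}, T_{R_2'}$ are isomorphic via the identity on $X'$. The forward direction uses that any such isomorphism must send $x, y$ to $x, y$ and hence fix their common parent (which is the new label in $X'$), while the backward direction is automatic by reattaching the cherry. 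Combined with the inductive hypothesis $|S(N')| = 2^{r(N')}$, this yields $|S(N)| = 2^{r(N)}$.

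Suppose instead that $N$ has a reticulated cherry $\{x, y\}$ via the undirected path $x, v_1, v_2, y$, where WLOG $v_1$ is a tree vertex and $v_2$ is a reticulation; let $u$ be the other parent of $v_2$ (necessarily distinct from $v_1$), and $p$ the parent of $v_1$. Partition $\mathcal{R}_N = \mathcal{R}_N^{v_1} \sqcup \mathcal{R}_N^{u}$ according to which of the arcs $(v_1, v_2), (u, v_2)$ is removed; each subset has size $2^{r(N)-1}$. For $R \in \mathcal{R}_N^{v_1}$, both $v_1$ and $v_2$ become degree $2$ in the undirected $T_R$, and suppressing them produces the embedded spanning tree $T_{R'}$ of the standard reticulated cherry reduction $N'$ (which has $r(N') = r(N) - 1$); by induction these give $2^{r(N)-1}$ non-isomorphic classes. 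For $R \in \mathcal{R}_N^{u}$, I would instead set up an alternative reduction $N''$ obtained by deleting $(u, v_2)$, suppressing $v_2$ (so that $\{x, y\}$ becomes a genuine cherry at $v_1$), and suppressing $u$ (whose out-degree drops to $1$). Using tree-child at $u$, the non-$v_2$ child of $u$ is a tree vertex or leaf, which lets one check that $N''$ is again a tree-child network with $r(N'') = r(N) - 1$, and an analogous suppression argument identifies this side with $S(N'')$, yielding another $2^{r(N)-1}$ classes.

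It remains to show that the two families $\mathcal{R}_N^{v_1}$ and $\mathcal{R}_N^{u}$ give disjoint classes of embedded spanning trees. For $R \in \mathcal{R}_N^{u}$, the unique $x$-to-$y$ path in $T_R$ is $x, v_1, v_2, y$, of length exactly $3$. For $R \in \mathcal{R}_N^{v_1}$, this path must instead leave $v_1$ via $p$, traverse the network up to $u$, and descend through $v_2$, giving length at least $4$ (using $v_1 \neq u$). Since distances between leaves in $X$ are preserved by any isomorphism that fixes $X$ pointwise, no tree in one family can be isomorphic to one in the other. Hence $|S(N)| = 2 \cdot 2^{r(N)-1} = 2^{r(N)}$, completing the induction. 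The main obstacle I expect is the careful setup of the alternative reduction $N''$, in particular verifying that it remains a tree-child phylogenetic network in the subcase where $u$ is the root of $N$ (so that suppressing $u$ must also promote its surviving child to the new root) and that degree-$2$ vertex suppression genuinely commutes with ``isomorphism via the identity on $X$''.
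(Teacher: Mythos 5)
Your argument is, at its core, the same as the paper's: both hinge on Theorem~\ref{t:cherries}, reduce a cherry or a reticulated cherry, and use the fact that the two choices of reticulation arc at the reticulated cherry force non-isomorphic embedded spanning trees. The packaging differs: the paper argues by minimal counterexample (minimal $|X|$, then minimal number of arcs) and reuses the single standard reticulated cherry reduction $M$ for both subcases $p=r$ and $p=r'$, whereas you run a direct induction on $|X|+r(N)$, split $\mathcal{R}_N$ explicitly into the two halves, and introduce a second, ad hoc reduction $N''$ (deleting $(u,v_2)$ rather than $(v_1,v_2)$) to handle the half the paper folds into $M$. Two things your version buys: an explicit count rather than a contradiction, and a concrete justification (via the $x$--$y$ path having length exactly $3$ versus at least $4$) of the disjointness of the two families, which the paper dismisses as ``straight-forward to check.'' The one place where you owe more than you deliver is the descent of isomorphisms through the $N''$ reduction: for $R_1,R_2\in\mathcal{R}_N^{u}$, an isomorphism $T_{R_1}\to T_{R_2}$ fixing $X$ pointwise necessarily fixes $v_1$ and $v_2$ (each is the unique neighbour of a leaf in $X$), but there is no such anchor for $u$, which sits as an anonymous degree-$2$ vertex (or pendant non-$X$ vertex, if $u$ is the root) somewhere in the tree; since the paper's notion of isomorphism is sensitive to degree-$2$ vertices, you cannot immediately conclude that suppressing $u$ on both sides yields isomorphic trees, which is exactly the implication your appeal to the inductive hypothesis for $N''$ requires. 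You flag this yourself as the expected obstacle, and in fairness the paper's own phrase ``the reduced versions of $S_R$ and $S_{R'}$ in $M$ must be isomorphic'' glosses over the same compatibility of suppression with $X$-isomorphism; but in your write-up it is the one step that genuinely needs an additional argument before the induction closes.
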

\begin{proof}
We will show that if $R \neq R' \in \mathcal{R}_N$, then
the embedded spanning trees in $S(N)$ arising for $T_R$ and $T_{R'}$ (which we
denote by $S_R$ and $S_{R'}$, respectively) are not
isomorphic via an isomorphism of trees which is the identity on $X$. 

Suppose this is not the case. Let $X$ be of minimal size such that there 
is a tree-child network $N$ on $X$ with $R \neq R' \in \mathcal{R}_N$, but 
$S_R$ is isomorphic to $S_{R'}$. Moreover, out of all such
networks on $X$, pick $N$ which has a minimal number of arcs.
It is straight-forward to check using the above observation for 
tree-child networks with two leaves that $|X|$ must be greater than $2$.

Since $N$ is tree-child, it must contain a cherry or reticulate cherry (Theorem~\ref{t:cherries}(i)). If
it contains a cherry, then perform a cherry reduction on $N$ to obtain 
a tree-child network $M$. This reduction does not affect any 
reticulation arcs, and so $R$ and $R'$ are both subsets
of the arcs of $M$. Moreover, as $S_R$ is isomorphic to $S_{R'}$, 
this also holds for the reduced versions of $S_R$ and $S_{R'}$.
But this contradicts the fact that $X$ was chosen to be minimal, since
$M$ has a smaller leaf-set than $N$.

If $N$ contains a reticulate cherry, then perform a reticulate 
cherry reduction on $N$ to obtain a tree-child network $M$ (by Theorem~\ref{t:cherries}(ii)). 
Let $r$ be the reticulation arc which is removed in the reduction, and $r' \neq r$
be the reticulation arc that is incident with $r$.
Note that we must have either $r \in R \cap R'$ or
$r' \in R \cap  R'$, or else it is straight-forward to check that $S_R$ is
not isomorphic to $S_{R'}$, a contradiction. 
So, suppose $p$ is equal to $r$ or $r'$ and $p \in R \cap R'$. 
Then as $R\neq R'$ and $|R|=|R'|$, $R-\{p\}$ and $R'-\{p\}$ must both
be non-empty, and $R-\{p\} \neq R'-\{p\}$.
Moreover, we can consider $R-\{p\}$ and $R'-\{p\}$ as being 
contained in the set of reticulation arcs in $M$. But the reduced versions 
of $S_R$ and $S_{R'}$ in $M$ must be isomorphic, which 
contradicts the choice of $N$, since $M$ has less arcs than $N$.
\end{proof}

We now show that if two tree-child networks have the same set of 
embedded spanning trees, then they must be isomorphic. We begin
with a useful observation:

\begin{lem}\label{helper}
	Suppose that $N$ and $N'$ are tree-child networks on $X$. If, 
	for $x,y \in X$, either of the following hold:
	\begin{itemize}
		\item [(i)] $N$ contains a cherry $\{x,y\}$ and $N'$ does not; or
		\item [(ii)] $N$ contains a reticulate cherry $\{x,y\}$ with $y$ the 
	 leaf below the reticulation and $N'$ does not,
	\end{itemize}
	then $S(N) \neq S(N')$.
\end{lem}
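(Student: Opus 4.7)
The plan is to exhibit, in each case, graph-theoretic invariants of the unrooted labelled trees that comprise $S(N)$ (preserved under isomorphism fixing $X$) which hold precisely when the cherry or reticulated-cherry structure in question is present in $N$. If $S(N)=S(N')$, any such invariant also holds for $S(N')$ and so forces the same structure in $N'$, contradicting the hypothesis. The underlying fact used throughout is that for any $z\in X$ and any $R\in\mathcal{R}_N$, the unique neighbour of $z$ in the unrooted version of $T_R$ is always the $N$-parent $p_z$ of $z$, since the arc $(p_z,z)$ is a tree arc and hence never an element of $R$.

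For part (i), I would use the invariant ``$x$ and $y$ share a common neighbour in every $S\in S(N)$''. By the observation above, this is equivalent to $p_x=p_y$ in $N$, i.e.\ to $\{x,y\}$ being a cherry of $N$. Transferring the invariant through $S(N)=S(N')$ forces $p_{x}'=p_{y}'$ in $N'$ and so contradicts the hypothesis that $\{x,y\}$ is not a cherry of $N'$.

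For part (ii), I would use the invariant: \emph{there exist} $S_1,S_2\in S(N)$ \emph{such that} $d_{S_1}(x,y)=3$, \emph{the unique neighbour of} $x$ \emph{in} $S_1$ \emph{has degree} $3$, \emph{and} $d_{S_2}(x,y)>3$. For the forward direction, if $\{x,y\}$ is a reticulated cherry of $N$ with $y$ below the reticulation $r$ whose tree-vertex parent (also the parent of $x$) is $v_1$, then the embedded spanning trees $S_1,S_2$ obtained by retaining, respectively removing, the reticulation arc $(v_1,r)$ satisfy the invariant: in $S_1$ the path $x,v_1,r,y$ has length $3$ and $v_1$ has degree $3$ (since its in-arc from its parent plus its two out-arcs are all retained), while in $S_2$ the path from $x$ to $y$ must be routed via the other parent of $r$, giving length at least $4$. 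For the backward direction, $d_{S_1}(x,y)=3$ forces $p_x$ and $p_y$ to be adjacent in $S_1$, while $d_{S_2}(x,y)>3$ forces them to be non-adjacent in $S_2$, so the arc between them in $N$ must be a reticulation arc; since reticulation arcs are in-arcs of reticulations, one of $p_x,p_y$ must be a reticulation. The degree-$3$ condition on the $S_1$-neighbour of $x$ rules out $p_x$ being a reticulation (which has total degree $2$ in every embedded spanning tree) and also $p_x$ being the root (which has in-degree zero and so total degree at most $2$), forcing $p_x$ to be a tree vertex. Consequently $p_y$ is the reticulation, $p_x$ is one of its parents, and together with the arcs $(p_y,y)$ and $(p_x,x)$ this gives exactly the reticulated cherry $\{x,y\}$ with $y$ below the reticulation. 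Transferring the invariant through $S(N)=S(N')$ therefore forces the same structure in $N'$, contradicting the hypothesis.

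The principal obstacle is case (ii): the bare distance-$3$ condition is already satisfied by spurious configurations such as a cascade of two tree vertices $x,p_x,p_y,y$ (where $p_x$ and $p_y$ are both tree vertices linked by a tree arc) or a configuration in which the role of $v_1$ is played by the root. The condition $d_{S_2}(x,y)>3$ rules these out, because in both spurious cases the edge between $p_x$ and $p_y$ comes from a tree arc and so is present in every $S\in S(N)$, forcing the distance between $x$ and $y$ to equal $3$ in every such $S$. The degree-$3$ condition on $x$'s neighbour similarly eliminates the reticulation and root possibilities for $p_x$, leaving only the genuine reticulated-cherry configuration with $y$ below the reticulation.
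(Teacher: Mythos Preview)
Your proof is correct, and the overall strategy---exhibit an isomorphism-invariant property of $S(N)$ that characterises the cherry/reticulated cherry and then transfer it via $S(N)=S(N')$---is the same one the paper uses. Part~(i) is essentially identical to the paper's argument.

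For part~(ii) the two proofs diverge in detail, though not in spirit. The paper records that every tree in $S(N)$ satisfies one of two local configurations (a length-$3$ path $x,u,v,y$ with $v$ of degree~$2$, or a configuration with $d(x,y)\ge 4$ and degree-$2$ vertices next to both $x$ and $y$), and that each configuration occurs at least once. Using $S(N)=S(N')$ to obtain a length-$3$ path $x,a,b,y$ in $N'$, it then runs a four-way case analysis on the types of $a$ and $b$ (tree vertex, reticulation, or root) and shows each case contradicts one of the two configurations. Your proof instead packages the invariant as the existence of $S_1,S_2$ with $d_{S_1}(x,y)=3$, $\deg_{S_1}(p_x)=3$, and $d_{S_2}(x,y)>3$, and argues directly: the distance conditions force the $p_x$--$p_y$ edge to be a reticulation arc, and the degree-$3$ condition on $p_x$ pins the reticulation to $p_y$. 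The asymmetry between $x$ and $y$ is encoded through the degree-$3$ condition on $p_x$ in your argument, whereas the paper encodes it through the degree-$2$ condition on the vertex adjacent to $y$. Your route avoids the explicit case split and is slightly more compact; the paper's route makes the structural dichotomy on $S(N)$ more visible. Both are clean and of comparable length.
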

\begin{proof}
(i) Suppose $N$ contains a cherry $\{x,y\}$ and $N'$ does not. Then
in the underlying graph for $N$ there is a path of length 2 between $x$ and $y$,
whereas in the underlying graph for $N'$ there is no such path. It easily
follows that $S(N) \neq S(N')$.

(ii) Suppose $N$ contains a reticulate cherry $\{x,y\}$ with $y$ the reticulation leaf,
and $N'$ does not, but that $S(N) = S(N')$. 
Note that every tree in $S(N)$ contains either 
\begin{enumerate}
	\item [(a)] a path $x,u,v,y$ of length 3 between
$x$ and $y$ with $v$ degree 2, or 
	\item [(b)] two paths of length 2 of the form $x,u',v'$ where $u'$ is
a degree 2 vertex and $y, u'', v''$ where $u''$ has degree 2, and no path 
between $x$ and $y$ of length less than 4,
\end{enumerate}
(see Figure~\ref{f:retic.cherry}).
Moreover, there
exists at least one tree in $S(N)$ which contains (a) and at least one that contains (b).
\begin{figure}[ht]
\includegraphics[]{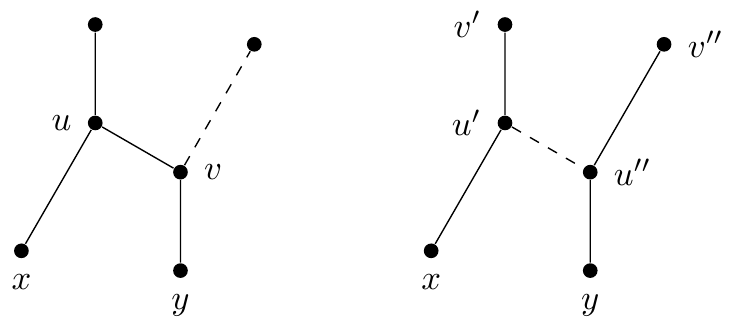}
\caption{}\label{f:retic.cherry}
\end{figure}

Since $S(N) = S(N')$, and since they are non-empty, there must be some tree in $S(N')$ which has a path
of length 3 between $x$ and $y$. Let $x,a,b,y$ be this path. Then
in the network $N'$ it is straight-forward to check that 
we must have one of the following possible cases: (1) $a$ is a tree vertex
and $b$ is a reticulation, (2) $b$ is a tree vertex and $a$ is a reticulation, or (3)
$a$ and $b$ are both tree vertices, or (4) one of $a,b$ is a tree vertex and the 
other the root vertex.

First, note that (1) is not possible as then 
$N'$ contains a reticulate cherry $\{x,y\}$ with $y$ the reticulation leaf. 
In case (2) it follows that  $N'$ contains a reticulate cherry $\{x,y\}$ with $x$ the reticulation leaf. 
But then there is no tree in $S(N')$ which contains structure (a), which contradicts $S(N)=S(N')$.
In cases (3) and (4), there is no tree in $S(N')$ which contains structure (b), again a contradiction.
\end{proof}

We are now able to prove the main result of this section, namely that sets of embedded spanning trees characterise tree-child networks.

\begin{thm}\label{isomorphic}
	Suppose that $N$ and $N'$ are tree-child networks on $X$. 
	Then $S(N)=S(N')$ if and only if $N$ is isomorphic to $N'$.	
\end{thm}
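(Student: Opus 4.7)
The forward direction is immediate from the definition of $S(N)$. For the backward direction, I would induct on $|X|+r(N)$, which decreases by one under either reduction operation: a cherry reduction reduces $|X|$ by one while preserving $r$, and a reticulate cherry reduction preserves $|X|$ while reducing $r$ by one. The base case $|X|+r(N)=1$ is the single-vertex network, where the conclusion is trivial. The case $|X|=2$ is handled directly: the two tree-child networks on two leaves in Figure~\ref{f:tcn.2leaves} have $|S(N)|=1$ and $|S(N)|=2$ respectively, so $S(N)=S(N')$ already forces $N \cong N'$.

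For the inductive step, Theorem~\ref{t:cherries}(i) produces a cherry or a reticulate cherry in $N$. If $N$ has a cherry $\{x,y\}$, then Lemma~\ref{helper}(i) combined with $S(N)=S(N')$ implies $N'$ has the same cherry; let $M,M'$ be the cherry reductions, relabeling the common parent with a new symbol $z$ in both. The arcs from the common parent to $x$ and $y$ are tree arcs, so every $T_R$ contains the cherry $\{x,y\}$; thus cherry reduction is an intrinsic operation on the trees of $S(N)$ and defines a bijection with $S(M)$. Hence $S(M)=S(M')$, the inductive hypothesis gives $M \cong M'$, and this isomorphism lifts to $N \cong N'$ by reattaching $x,y$ at the image of $z$.

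Otherwise $N$ has no cherry but has a reticulate cherry. Lemma~\ref{helper}(i), applied with the roles of $N$ and $N'$ reversed, rules out a cherry in $N'$; Theorem~\ref{t:cherries}(i) then provides a reticulate cherry in $N'$, and Lemma~\ref{helper}(ii) forces this to be the same $\{x,y\}$ with $y$ the reticulation leaf. Let $v_1$ be the tree vertex adjacent to $x$ and $v_2$ the reticulation adjacent to $y$, so that the reduction removes the arc $(v_1,v_2)$ and suppresses the resulting degree-2 vertices; let $M,M'$ be the resulting tree-child networks (Theorem~\ref{t:cherries}(ii)). The analysis in the proof of Lemma~\ref{helper}(ii) shows that each tree in $S(N)$ is intrinsically of type (a) or type (b), the two types corresponding to the two possible choices of parent arc to remove at $v_2$. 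The type-(b) trees are precisely those arising from $R\in\mathcal{R}_N$ with $(v_1,v_2)\in R$, so by Theorem~\ref{tight} there are exactly $2^{r(N)-1}=|S(M)|$ of them; and suppressing the two distinguished degree-2 vertices adjacent to $x$ and $y$ in a type-(b) tree produces the corresponding tree in $S(M)$. This yields a bijection from the type-(b) trees of $S(N)$ onto $S(M)$. Since $S(N)=S(N')$ forces the type-(b) subsets to agree, we get $S(M)=S(M')$; the inductive hypothesis gives $M\cong M'$, and reinserting $v_1,v_2$ on the arcs ending at $x$ and $y$ together with the arc $(v_1,v_2)$ lifts this to $N\cong N'$.

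The main obstacle is showing cleanly that, in the reticulate cherry case, the type-(b) trees of $S(N)$ suppress to a complete copy of $S(M)$, so that $S(N)=S(N')$ transfers to $S(M)=S(M')$. This relies on the characterisation of types (a) and (b) being intrinsic to the tree (so preserved by any identity-on-$X$ isomorphism, and in particular by $S(N)=S(N')$) and on the cardinality count from Theorem~\ref{tight} matching the number of type-(b) trees to $|S(M)|$, which makes the natural suppression map a bijection rather than merely a surjection.
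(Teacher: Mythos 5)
Your proof is correct and follows essentially the same route as the paper's: find a cherry or reticulated cherry via Theorem~\ref{t:cherries}, transfer it from $N$ to $N'$ via Lemma~\ref{helper}, reduce both networks, and recurse (the paper phrases this as a minimal counterexample on $|X|$ and then on the number of arcs, which is equivalent to your induction on $|X|+r(N)$). Your type-(a)/(b) analysis showing that $S(N)=S(N')$ descends to $S(M)=S(M')$ after a reticulated cherry reduction is a welcome elaboration of a step the paper asserts without detail.
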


\begin{proof}
The reverse direction is immediate: networks that are isomorphic will have the same set of embedded spanning trees.  It remains to show that if $S(N)=S(N')$ then $N$ and $N'$ are isomorphic.

For the purposes of obtaining a contradiction, 
suppose that there exists a pair $N,N'$ of non-isomorphic 
tree-child networks on some set $X$, with $S(N) = S(N')$. 
It is straight-forward to check that $|X|>2$ using the observation made after Theorem~\ref{t:cherries} concerning
tree-child networks with two leaves. 
Take $|X|$ minimal for which there exists such a pair, and 
out of all these pairs on $X$, take a pair which minimizes
$\min\{|A(N)|,|A(N')|\}$ (without loss of generality suppose that this minimum is 
obtained for $N$). 

Consider the chosen minimal pair $N,N'$. Since $N$ is tree-child, 
it must contain either a cherry or a reticulated cherry, by Theorem~\ref{t:cherries}.
Moreover, it follows by Lemma~\ref{helper} that if $N$ contains a cherry $\{x,y\}$ 
then so must $N'$ (otherwise $S(N) \neq S(N')$), 
and that if $N$ contains a reticulate cherry $\{x,y\}$ with $y$ the reticulation leaf,
then so must $N'$.

Now, note that if $N$ and $N'$ are not isomorphic and both have a cherry $\{x,y\}$
(respectively both have a reticulated cherry $\{x,y\}$ with $y$ the reticulation leaf), 
then the tree-child networks $M$ and $M'$ obtained by performing a cherry reduction
on $\{x,y\}$ for $N$ and $N'$ (respectively a reticulated cherry reduction on $\{x,y\}$
for $N$ and $N'$) are not isomorphic. To see this, note that 
if $M$ and $M'$ are isomorphic, then we can extend the 
isomorphism to $N$ and $N'$.

Putting this together, if $N$ and $N'$ both contain a cherry $\{x,y\}$,
then perform a cherry reduction on both, to obtain two necessarily non-isomorphic
tree-child networks $M$ and $M'$ with $S(M)=S(M')$ (the last equality
follows from $S(N)=S(N')$). But this contradicts the choice of $N$ and 
$N'$ since the leaf-sets of $M$ and $M'$ are the same and this leaf-set 
is smaller than $X$.
And, if $N$ and $N'$ both contain 
a reticulate cherry $\{x,y\}$ with $y$ the leaf below the reticulation, then
perform a reticulate cherry reduction on both, to obtain two necessarily non-isomorphic
tree-child networks $M$ and $M'$ with $S(M)=S(M')$. But this 
again  contradicts the choice of $N$ and 
$N'$ since $M$ has a smaller number of arcs than $N$.
\end{proof}

\section{The tree model}\label{s:models}

\subsection{Evolution along a tree}\label{s:evol.on.tree}
We will consider the model $M=M(\mu)$, $\mu \in [0,1]$, of evolution of characters along branches of a
rooted tree with leaf-set $X$ as described by~\cite[p49]{thatte2013reconstructing}.   

A \emph{character} is a map from $X$ into an alphabet $\Sigma$, 
which for simplicity one can assume to be the set of nucleotides $\{A,C,G,T\}$.  An 
alignment is an $L$-tuple of characters on $X$, or a map $X\to \Sigma^L$.  If 
characters are considered as column vectors indexed by the leaf-set $X$, an 
alignment is an $|X|\times L$ array.  The columns of this array are called \emph{sites}.  
Thus an alignment is an array with rows labelled by elements of $X$ and columns 
labelled by the sites in the sequence, whose content at each site is the character 
value at the site.  Alignments can 
be considered elements of $(\Sigma^L)^{|X|}$, which we abbreviate $\Sigma^{XL}$, following~\cite{thatte2013reconstructing}.

Evolution of states on a tree under the model $M=M(\mu)$ requires setting an 
initial state for the root, and a rule for assigning states on vertices given the 
state of their parent vertex.  The root is assigned a state from $\Sigma$ uniformly 
at random with probability $\frac{1}{|\Sigma|}$. 
 Along an edge $(v,w)$, if $v$ is in state $x$, then there is probability $\mu$ that $w$ 
has state $y\in \Sigma\setminus \{x\}$.  Thus the probability of a change on a given edge 
is $(|\Sigma|-1)\mu$, and the probability of no change is $1-(|\Sigma|-1)\mu$.

As explained in~\cite[p.50]{thatte2013reconstructing}, model $M$ on a rooted $X$-tree
is equivalent to a similarly formulated model on the (undirected) tree $T$  that underlies it 
(that is, the (unrooted) tree with the same vertex set, and directions on all arcs ignored).
More specifically, suppose a root vertex is chosen arbitrarily in $T$, 
a letter from $\Sigma$ is assigned to it uniformly at random, and 
the state is then evolved along the edges away from the root.
Then, since the mutation model $M$ is reversible, the same distribution 
on the site patterns is observed on $X$ in the 
tree $T$ as in the rooted tree for a given $\mu$ (independent of the choice of root). 
In consequence, if we try to construct the rooted tree from the character distribution 
on its leaves, we can at best construct the underlying tree. Hence,  
in what follows we will  not differentiate between 
a rooted $X$-tree and the tree that underlies it 
when referring to model $M$.

\subsection{Identifiability for $X$-trees}\label{s:thatte.RPNs}

Let $p_i=p_i(T,\mu)$ be the probability of observing the character $C_i$ given 
the tree $T$ and the mutation model $M(\mu)$, that is, 
\[p_i=Pr\{C_i\mid T,M(\mu)\}.
\]
Then let $p(T,\mu)$ be the vector of probabilities of \emph{all} possible 
characters in an alignment, so that 
\[p(T,\mu)=(p_1,\dots,p_n),\] 
with $n:=|\Sigma|^{|X|}$ the 
number of possible characters.
This represents the theoretical distribution of character values predicted from the model.

In~\cite{thatte2013reconstructing}, a key identifiability result concerning collections of $X$-trees 
is presented which we now recall.
Given an alignment $A\in\Sigma^{XL}$, let $f(A):=(f_1,f_2,\dots,f_n)$, where $f_i$ is the 
proportion of columns of $A$ of type $C_i$ (the relative frequency of the character $C_i$ in $A$).
This is thus the \emph{observed} distribution of character values in the alignment.
In addition, let $\rho(s,r)$ denote the ball of radius $r$ around the point $s$ in $\R^n$, 
with distance defined by the $L_1$ (``taxicab'') metric.

Fix $\U$ to be a finite set of $X$-trees, and $r_0$ to be  half the smallest $L_1$ distance 
between frequencies 
predicted on distinct trees, that is,
\[r_0=\frac{1}{2}\min\left\{d(p(T,\mu),p(T',\mu))\mid T,T'\in \U, T\neq T'\right\}. 
\]
For $T$ an $X$-tree in $\U$, define 
\begin{equation}\label{eq:AT}
\A_T=\{A\in\Sigma^{XL}\mid f(A)\in \rho(p(T,\mu),r_0)\}.
\end{equation}  
The set $\A_T$, which depends on $r_0$, is the set of alignments for which 
the distribution of frequencies of characters is close to (within $r_0$ of) that of the 
theoretical prediction of evolution on tree $T$.

Now, for each $T \in \U$, let 
$$\epsilon_T= 1-Pr\{\A_T\mid T, M(\mu)\},$$ 
and put $\epsilon_{max}= \max_{T \in \U} \{\epsilon_T\}$. 
The probability $Pr\{\A_T\mid T, M(\mu)\}$ is the probability of observing 
alignments in the set $\A_T$, given the tree $T$ and model $M(\mu)$.  
That is, the probability of observing an alignment containing characters of 
frequencies close to those predicted theoretically.  

The following theorem shows that for sufficiently low mutation probability $\mu$, there 
is an alignment length $L$ that makes the probability of observing a member of $\A_T$ from 
the process on $T$ higher than $1-\epsilon_{max}$ and, at the same time, 
the probability of observing an element of $\A_T$ 
for the process on a tree $T'$ that is \emph{not isomorphic} to $T$ is smaller than $\epsilon_{max}$.
For its proof see \cite[page 59]{thatte2013reconstructing}.

\begin{thm}\label{t:XLT.alignment.bounded}
Let $\U$ be a finite set of $X$-trees, let $T\in\U$, and let $\mu \in (0,\frac{1}{|\Sigma|})$.   
Then there is an $L=L(T)$ such that 
\[1-Pr\{\A_T\mid T, M(\mu)\}<\epsilon_{max},\]
and
\[Pr\{\A_T\mid T', M(\mu)\}<\epsilon_{max}
\]
for all $T'\in \U\setminus\{T\}$.  
\end{thm}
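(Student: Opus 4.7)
The plan is to combine three standard ingredients: identifiability of the model at the level of distributions (which ensures $r_0>0$), the coordinatewise law of large numbers for i.i.d.\ site generation, and the triangle inequality in the $L_1$ metric. Since $\U$ is finite, a maximum of finitely many alignment-length thresholds will yield the desired $L=L(T)$.

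First I would observe that, conditional on $T$ and $M(\mu)$, the $L$ columns of an alignment $A\in\Sigma^{XL}$ are i.i.d.\ samples from the categorical distribution $p(T,\mu)$ on the finite set of $n=|\Sigma|^{|X|}$ possible characters, so $f(A)$ is the empirical mean of a multinomial sample with mean $p(T,\mu)$. For any prescribed $\delta>0$, a Hoeffding bound applied coordinatewise together with a union bound over the $n$ coordinates produces a threshold $L_T^{(1)}$ such that
\[
Pr\bigl\{\|f(A)-p(T,\mu)\|_1<r_0\bigm|T,M(\mu)\bigr\}>1-\delta
\]
for all $L\geq L_T^{(1)}$. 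Choosing $\delta<\epsilon_{max}$ delivers the first required inequality.

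Next, for $T'\in\U\setminus\{T\}$, I would invoke the triangle inequality: on the event $A\in\A_T$ one has $\|f(A)-p(T,\mu)\|_1<r_0$, while by the definition of $r_0$ we have $\|p(T,\mu)-p(T',\mu)\|_1\geq 2r_0$; the reverse triangle inequality then gives $\|f(A)-p(T',\mu)\|_1>r_0$. Applying the same coordinatewise concentration bound to the process on $T'$ supplies a threshold $L_{T'}^{(2)}$ with
\[
Pr\bigl\{\|f(A)-p(T',\mu)\|_1>r_0\bigm|T',M(\mu)\bigr\}<\epsilon_{max}
\]
for all $L\geq L_{T'}^{(2)}$, and therefore $Pr\{\A_T\mid T',M(\mu)\}<\epsilon_{max}$. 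Setting $L(T):=\max\bigl\{L_T^{(1)},\,\max_{T'\in\U\setminus\{T\}}L_{T'}^{(2)}\bigr\}$, which is well defined because $\U$ is finite, makes both bounds hold simultaneously.

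The only non-routine input, and the step I expect to be the main obstacle, is confirming that $r_0>0$ in the first place: this requires that for $\mu\in(0,1/|\Sigma|)$ the theoretical distributions $p(T,\mu)$ are pairwise distinct on non-isomorphic $T\in\U$, which is the classical identifiability theorem for $X$-trees under the symmetric $|\Sigma|$-state Markov model. Once this input is granted, the remainder of the argument is a straightforward concentration-of-measure estimate over a finite alphabet.
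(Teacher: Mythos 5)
First, a point of comparison: the paper itself contains no proof of Theorem~\ref{t:XLT.alignment.bounded}; the result is imported from \cite{thatte2013reconstructing} with the proof deferred to p.~59 of that paper. Your route --- i.i.d.\ columns under $M(\mu)$ on a fixed tree, a coordinatewise Hoeffding/union bound forcing $f(A)$ to concentrate around $p(T,\mu)$, and the triangle inequality against the $2r_0$ separation to show $\A_T$ has small probability under any $T'\neq T$ --- is exactly the standard argument and is, in substance, the argument of the cited source. Your diagnosis of the substantive input is also correct: everything hinges on $r_0>0$, i.e.\ on distinct trees in $\U$ inducing distinct distributions for $\mu\in(0,\tfrac{1}{|\Sigma|})$. (One small caution there: the paper's $X$-trees may carry internal degree-2 vertices, so the identifiability claim needed is slightly stronger than the usual topology-identifiability statement --- one must also recover the subdivision pattern. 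This does hold, since under the symmetric model every edge contributes the same positive amount to the standard distance transform when $0<\mu<\tfrac{1}{|\Sigma|}$, so the leaf-to-leaf path lengths in unit edges, and hence the subdivided tree, are determined.)

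The one step that is not airtight as written is ``Choosing $\delta<\epsilon_{max}$ delivers the first required inequality.'' As defined in Section~\ref{s:thatte.RPNs}, $\epsilon_{max}=\max_{T''\in\U}\bigl(1-Pr\{\A_{T''}\mid T'',M(\mu)\}\bigr)$ is computed from sets $\A_{T''}\subseteq\Sigma^{XL}$ and therefore itself depends on $L$ (and tends to $0$ as $L\to\infty$), so it cannot be fixed as a target threshold before $L$ is chosen; worse, the first asserted inequality then literally reads $\epsilon_T<\max_{T''}\epsilon_{T''}$, which your concentration bound does not by itself give (it fails whenever $T$ attains the maximum). This circularity is inherited from the way the statement is transcribed in the paper rather than introduced by you, but a complete write-up has to confront it: what your argument actually proves, and what the application in the proof of Theorem~\ref{t:main.result} actually needs, is the cleanly quantified form ``for every $\epsilon>0$ there is $L$ such that $1-Pr\{\A_T\mid T,M(\mu)\}<\epsilon$ and $Pr\{\A_T\mid T',M(\mu)\}<\epsilon$ for all $T'\in\U\setminus\{T\}$.'' You should either state and prove that version, or say explicitly how $\epsilon_{max}$ is being fixed independently of the $L$ you are about to choose. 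With that repair, the rest of your argument (including taking the maximum of finitely many thresholds over $\U$) is correct.
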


\section{Evolution along a network}

\subsection{Description of the model}\label{s:Markov.on.network}

In the previous section, we described the evolutionary model $M$ for evolution along a tree; 
we now extend this model to networks, adapting the model for pedigrees in \cite{thatte2013reconstructing}. Our model will be defined
for networks $N$ such that every tree in $S(N)$ is an $X$-tree and $|S(N)|=2^{\omega}$, for 
$\omega=|R(N)|$,  holds. 

We first  define a Markov process on the set of rooted trees
$T(N)= \{T_R \,:\, R \in \mathcal{R}_N\}$. 
Given an element $T_R \in T(N)$, for each vertex $w\in R(N)$ we assign a 
fixed probability $p$ to make a change of vertex $w$'s parent to give another tree in $T(N)$. 
This describes a Markov chain on $T(N)$: the initial state given by taking a random choice of parent 
for each reticulation vertex in $N$ (probability $0.5$ assigned to each),
and the probability of being in any particular state (a tree in $T(N)$), at any point 
in the process, is uniform and equal to $\frac{1}{2^\omega}$.  

The Markov process that moves between trees in $T(N)$, together with the 
evolutionary model $M(\mu)$ for each tree now defines a network model under which 
characters evolve, which we denote $RM(\mu,p)$. 
Note that in this model, it is straight-forward to show using 
a similar argument to that used in the proof of 
\cite[Proposition 2]{thatte2013reconstructing}, that the probability of observing a 
character $C$ at the $k$th site of an alignment under $RM(\mu,p)$ is just 
the probability
of observing a given tree (which is $\frac{1}{2^\omega}$ since we are assuming $|S(N)|=2^{\omega}$), 
times the probability of 
observing the character on that tree (which, for $T \in S(N)$, is $Pr\{C\mid T,M(\mu)\}$
since we are assuming every $T\in S(N)$ is an $X$-tree), summed over all 
possible trees. That is,
\[
Pr\{C\mid N,RM(\mu,p)\}=\frac{1}{2^\omega}\sum_{T\in S(N)}Pr\{C\mid T,M(\mu)\}.
\]
Note that in particular, that this expression does not depend on $k$.

\subsection{Alignments arising from a network}\label{s:main.result}

This section adapts the approach to pedigrees used in~\cite{thatte2013reconstructing}, in order to 
derive similar results for networks.

The Markov process described in Section~\ref{s:Markov.on.network} moves around the set of 
$X$-trees $T(N)$ displayed by the network $N$.  By considering a sequence of 
characters generated on trees in this Markov chain, we are able to 
generate an alignment from $N$.  Such an alignment will be 
partitioned into a set of blocks, each of which arose from a particular tree.  
The following lemma describes how the probability 
of observing an alignment, given a rooted binary phylogenetic network $N$ 
and the model, can be computed.  It sums over cases according 
to the number of trees in the partition.  

A given sequence of trees $\T=(T_1,\dots, T_k)$ obtained from the Markov process has a 
sequence of transitions, each transition involving a certain number of changes of parent at 
reticulation vertices (this number of changes will be $\ge 1$, since adjacent trees 
in the sequence are distinct).  The total number $r(\T)$ of changes in the sequence of trees is given by
\[r(\T)=\sum_{i=1}^{k-1} \frac{1}{2}|E(T_{i+1})\triangle E(T_i)|,
\]
where $E(T)$ is the number of ``reticulation edges'' in $T$, namely edges that correspond to 
reticulation arcs in $N$, and $\triangle$ denotes the symmetric difference.  
Likewise, the total number of reticulation edges that are unchanged in transitions in 
the sequence, $s(\T)$, is given by \[s(\T)=\sum_{i=1}^{k-1} |E(T_{i+1})\cap E(T_i)|.\]

A \emph{composition} $\lambda$ of an integer $n$ is a sequence of positive integers that add to $n$, and is denoted $\lambda\vDash n$. 

The following Lemma~\ref{l:thatte.lem6} is a direct analogue of~\cite[Lemma 6]{thatte2013reconstructing}. 
Schematically, it computes the probability of an alignment by going from 
the network $N$ to the sequence of trees $\T$ (via the Markov 
process changing reticulation arcs), and from the sequence of trees to the alignment $A$.

\begin{lem}\label{l:thatte.lem6}
The probability of observing an alignment $A$ of length $L$ via the model $RM(\mu,p)$ on a network $N$ for which $|S(N)|=2^\omega$, is given by
\begin{align*}
& Pr\{A\mid N,RM(\mu,p)\}=\\
& \sum_{k=1}^L\left(\sum_{\substack{\mathbf T=(T_1,\dots,T_k)\\ T_i\neq T_{i+1}}}\left(\frac{p^{r(\T)}(1-p)^{s(\T)+\omega(L-k)}}{2^\omega}\sum_{(\ell_1,\dots,\ell_k)\vDash L}\left(\prod_{i=1}^kPr\left(A[L_{i-1}+1,L_i]\mid T_i,M(\mu)\right)
\right)
\right)
\right).
\end{align*}
\end{lem}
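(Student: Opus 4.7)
The plan is to expand $Pr\{A\mid N, RM(\mu,p)\}$ by first conditioning on the length-$L$ sequence of trees $(T^{(1)},\dots,T^{(L)})$ that governs character generation at each site, and then reorganize the sum by grouping maximal constant runs of this sequence into blocks. Given such a sequence, site-wise independence under $M(\mu)$ factors the character probability as
\[
Pr\{A\mid (T^{(1)},\dots,T^{(L)}), M(\mu)\}=\prod_{j=1}^{L} Pr\{A_j\mid T^{(j)}, M(\mu)\}.
\]
The Markov process on $T(N)$ contributes the factor $\frac{1}{2^\omega}\prod_{j=1}^{L-1}\pi(T^{(j)}\to T^{(j+1)})$, where the initial $\frac{1}{2^\omega}$ uses the hypothesis $|S(N)|=2^\omega$ so that the uniform initial distribution on $T(N)$ assigns each tree probability $\frac{1}{2^\omega}$.

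Next I compute the one-step transition probability $\pi(T\to T')$. Since the $\omega$ reticulations of $N$ switch parents independently, each with probability $p$, we obtain $\pi(T\to T')=p^{r}(1-p)^{\omega-r}$, where $r=\frac{1}{2}|E(T)\triangle E(T')|$ is the number of reticulations whose chosen parent differs between $T$ and $T'$; in particular, self-transitions contribute $(1-p)^\omega$. Now I condense a tree sequence into its maximal constant blocks, recording the distinct consecutive trees as $\mathbf T=(T_1,\dots,T_k)$ with $T_i\ne T_{i+1}$, and the block lengths as $(\ell_1,\dots,\ell_k)\vDash L$. Of the $L-1$ site-to-site transitions, exactly $L-k$ are self-transitions and $k-1$ are genuine transitions between consecutive blocks, so the product of transition probabilities becomes
\[
(1-p)^{\omega(L-k)}\prod_{i=1}^{k-1}p^{r_i}(1-p)^{\omega-r_i},
\]
with $r_i=\frac{1}{2}|E(T_{i+1})\triangle E(T_i)|$. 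Using $r(\mathbf T)=\sum_{i=1}^{k-1} r_i$ and $s(\mathbf T)=\sum_{i=1}^{k-1}(\omega-r_i)$ collects the exponents into the claimed prefactor $\frac{1}{2^\omega}p^{r(\mathbf T)}(1-p)^{s(\mathbf T)+\omega(L-k)}$, while the character factor collapses to $\prod_{i=1}^{k}Pr\{A[L_{i-1}+1,L_i]\mid T_i, M(\mu)\}$. Finally, regrouping the outer sum over $(T^{(1)},\dots,T^{(L)})$ by first summing over $k$, then over condensed sequences $\mathbf T$, and then over compositions $(\ell_1,\dots,\ell_k)\vDash L$ produces exactly the stated identity.

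The main obstacle is the combinatorial bookkeeping: one must verify that the partition of the $L-1$ transitions into $L-k$ self-transitions and $k-1$ block-changes is correct, and that the exponents $r(\mathbf T)$, $s(\mathbf T)$, and $\omega(L-k)$ correctly recover the full product of single-step transition probabilities. The rest of the argument is a direct translation of the pedigree argument in the proof of \cite[Lemma 6]{thatte2013reconstructing}, with the role of pedigree-inheritance choices played by the choice of reticulation arc set $R\in\mathcal{R}_N$ and the hypotheses $|S(N)|=2^\omega$ and that every tree in $S(N)$ is an $X$-tree ensuring that the trees $T_R$ can be treated interchangeably as $X$-trees for the purposes of $M(\mu)$.
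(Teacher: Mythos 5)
Your proposal is correct and follows essentially the same route as the paper: decompose over the (condensed) sequence of distinct consecutive trees $\mathbf T$ and over compositions of $L$, with the sequence probability $\frac{1}{2^\omega}p^{r(\mathbf T)}(1-p)^{s(\mathbf T)+\omega(L-k)}$ and the per-block character probabilities multiplying. Your explicit derivation of the one-step transition factor $p^{r_i}(1-p)^{\omega-r_i}$ and the run-length-encoding bookkeeping is a slightly more detailed version of what the paper asserts directly, but it is the same argument.
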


\begin{proof}
The alignment could be observed under any sequence of trees $\T$, and so 
we first break the problem into cases according to the length $k$ of this sequence, 
which can only be between 1 and $L$.  For each length of sequence $k$, we then 
sum over all possible sequences $\T$.  

The probability of observing the alignment on a particular sequence of 
trees $\T$ depends on the probability of observing the sequence $\T$, and the 
probability of the alignment given the particular trees in the sequence.

The probability of observing the sequence $\T$ is the probability of first 
observing the initial tree, $\frac{1}{2^\omega}$, times the probability of 
observing the numbers of recombinations and non-recombinations along the 
sequence, namely $p^{r(\T)}(1-p)^{s(\T)+\omega(L-k)}$.   

Finally, the probability of observing the alignment given the sequence of trees $\T$ \
depends on the lengths of the sub-alignments of $A$ that evolved on each tree (under $M(\mu)$).  
The possible lengths of the sub-alignments are given by the compositions of $L$ into $k$ parts, 
one for each tree in $\T$.  For a composition $(\ell_1,\dots,\ell_k)\vDash L$, set $L_i=L_{i-1}+\ell_i$, 
with $L_0=0$, to give the recombination sites (so that sites $L_{i-1}+1$ to $L_i$ evolved on tree $T_i$).  
Denote the sub-alignment of $A$ restricted to these sites by $A[L_{i-1}+1,L_i]$.  The probability 
of observing that sub-alignment on $T_i$ is then $Pr\left(A[L_{i-1}+1,L_i]\mid T_i,M\right)$, 
and the probability of observing the whole of $A$ given that sequence of trees $\T=(T_1,\dots,T_k)$ 
and that composition $(\ell_1,\dots,\ell_k)$ is the product of these over $i$ from 1 to $k$.  
\end{proof}

Lemma~\ref{l:thatte.lem6} shows how to compute a probability for each 
alignment $A$ of length $L$, given the network $N$ and model $RM(\mu,p)$, 
and so defines an alignment distribution $D_N=D_N(L,RM(\mu,p))$ which is given by the map
\[D_N:\Sigma^{XL}\to [0,1]\]
where
\[A\mapsto Pr\{A\mid N,RM(\mu,p)\}.\]

Two phylogenetic networks $N,N'$ in a class $\mathcal C$ are said to 
be \emph{distinguished} from one another under model $RM(\mu,p)$ 
if $D(\Sigma^{XL} \,|\, N, RM(\mu,p)) \neq D(\Sigma^{XL} \,|\, N', RM(\mu,p))$ for some $L$. 
That is, if there is some alignment $\mathcal A \subseteq \Sigma^{XL}$ such that
$Pr\{\mathcal A \,|\,  N, RM(\mu,p)\}\neq Pr\{\mathcal A \,|\,  N', RM(\mu,p)\}$.

\section{Main results}\label{s:main.results}

We now state and prove our main theorem, an 
analogue of~\cite[Theorem 2]{thatte2013reconstructing}.
This theorem states that for sufficiently large alignments, and 
a probability $p$, if the distributions of characters from two 
networks are the same, then any embedded spanning tree of one 
is also an embedded spanning tree of the other.  This will then imply
that the sets of embedded spanning trees for the two networks are the 
same (Corollary~\ref{first}).
	
\begin{thm}\label{t:main.result}
Suppose $N$ and $N'$ are phylogenetic networks on $X$, both with $\omega$ reticulations,
such that every tree in $S(N)$ and $S(N')$ is an $X$-tree and 
$|S(N)|=|S(N')|=2^{\omega}$. Let $\mu \in (0,\frac{1}{|\Sigma|})$.
If $T \in S(N)$, then there exists $L = L(T)\ge 1$ and $p=p(T) \in(0,1)$ such that if 
\[D(\Sigma^{XL}\mid N,RM(\mu,p))=D(\Sigma^{XL}\mid N',RM(\mu,p))\] 
then $T\in S(N')$. 
\end{thm}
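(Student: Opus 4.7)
The plan is to argue by contradiction. Suppose $T\in S(N)\setminus S(N')$ while the alignment distributions on $\Sigma^{XL}$ from $N$ and $N'$ coincide. Setting $\U=S(N)\cup S(N')$, which by hypothesis is a finite collection of $X$-trees, one can apply Theorem~\ref{t:XLT.alignment.bounded} together with the Law of Large Numbers to pick $L=L(T)$ so that simultaneously $\epsilon_{max}$ is as small as desired and $Pr\{\A_T\mid T',M(\mu)\}<\epsilon_{max}$ for every $T'\in\U\setminus\{T\}$. The target is then to produce a $p$ for which the set probabilities $Pr\{\A_T\mid N,RM(\mu,p)\}$ and $Pr\{\A_T\mid N',RM(\mu,p)\}$ differ, which contradicts the assumed equality of distributions, since equal distributions force equal probability on every set of alignments, in particular on $\A_T$.

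By Lemma~\ref{l:thatte.lem6}, each of these set probabilities decomposes as a single-tree ($k=1$) contribution plus corrections from tree-sequences of length $k\geq 2$. The $k=1$ contribution for $N$ is
\[\frac{(1-p)^{\omega(L-1)}}{2^\omega}\sum_{T'\in S(N)}Pr\{\A_T\mid T',M(\mu)\},\]
and analogously for $N'$. Since $T\in S(N)$ contributes at least $1-\epsilon_{max}$ to the $N$-sum, while every term of the $N'$-sum is $<\epsilon_{max}$ (because $T\notin S(N')$ and $|S(N')|=2^\omega$), the two $k=1$ contributions differ by at least
\[\frac{(1-p)^{\omega(L-1)}}{2^\omega}\bigl((1-\epsilon_{max})-2^\omega\epsilon_{max}\bigr).\]
Once $\epsilon_{max}$ has been fixed small enough that $1-(2^\omega+1)\epsilon_{max}>0$, this is a strictly positive quantity depending only on $L$, $\omega$ and $p$.

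Every $k\geq 2$ term in Lemma~\ref{l:thatte.lem6} carries a factor $p^{r(\T)}$ with $r(\T)\geq 1$. Bounding each conditional alignment probability by $1$, the number of tree-sequences of length $k$ by $(2^\omega)^k$, and the number of compositions of $L$ into $k$ parts by $\binom{L-1}{k-1}$, the total $k\geq 2$ contribution to either set probability is at most $p\cdot C(L,\omega)$ for some constant $C(L,\omega)$ independent of $p$. Choosing $p=p(T)$ so that $2pC(L,\omega)$ is strictly smaller than the single-tree gap established above forces $Pr\{\A_T\mid N,RM(\mu,p)\}>Pr\{\A_T\mid N',RM(\mu,p)\}$, the desired contradiction. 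I expect the main obstacle to lie in writing the uniform $O(p)$ bound on the $k\geq 2$ terms in a form that is genuinely independent of $p$ once $L$ is fixed, so that shrinking $p$ really does render the multi-tree correction negligible beside the fixed positive gap produced by Theorem~\ref{t:XLT.alignment.bounded}. Here the common reticulation count $\omega$ of $N$ and $N'$ is essential, since it equalises the $1/2^\omega$ prefactors and reduces the comparison to Thatte-style identifiability for individual $X$-trees.
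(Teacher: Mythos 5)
Your proposal is correct and follows essentially the same route as the paper: the same decomposition from Lemma~\ref{l:thatte.lem6} into single-tree ($k=1$) and multi-tree ($k\geq 2$) contributions, the same use of Theorem~\ref{t:XLT.alignment.bounded} to separate the $k=1$ terms for $N$ and $N'$ on the set $\A_T$, and the same choice of a sufficiently small $p$ to suppress the $O(p)$ multi-tree remainder. The only differences are presentational: you compare the resulting bounds linearly rather than via logarithms, and you make explicit the requirement that $\epsilon_{max}$ be small enough that $1-(2^\omega+1)\epsilon_{max}>0$, which the paper leaves implicit.
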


\begin{proof}
The proof of the theorem is by contradiction, and because it is a 
complicated statement we first briefly review the logical structure, 
which is as follows: 
\begin{center}
if ``A'', then there are $L$ and $p$ such that ``B'' implies ``C''.  
\end{center}
Here ``A'' is $T\in S(N)$, ``B'' is $D(N)=D(N')$, and ``C'' 
is $T\in S(N')$, where $D(N)$ is short for $D(\Sigma^{XL}\mid N,RM(\mu,p))$.

To argue by contradiction, we assume the negation of ``there 
are $L,p$ such that B implies C'', that is, we assume 
``for all $L,p$, we have B and not C''.  In other words, 
we assume that $T\in S(N)$, and for all $L\ge 1$ and $p\in(0,1)$ 
we have $D(N)=D(N')$ but $T\not\in S(N')$.

We will show that for
some choice of $L$ (depending on $T$) and $p$ (depending on $L$ and therefore on $T$), we 
obtain a contradiction.

If the distributions are the same, then by definition the probabilities are the same for each alignment $A$, or set of alignments $\A\subseteq\Sigma^{XL}$.  That is,
\[Pr\{\A\mid N,RM(\mu,p)\}=Pr\{\A\mid N',RM(\mu,p)\},
\]
for each $\A\subseteq\Sigma^{XL}$.  
These probabilities are decomposed in Lemma~\ref{l:thatte.lem6} for each network, 
based on the number of trees in the sequence that generates the alignment.
We now break this decomposition into components 
according to whether there is a single tree in the sequence, so that $\T=(T_1)$, 
or whether there is more than one tree in the sequence.  Furthermore, if there is one tree in the sequence $\T=(T_1)$, we consider two cases: whether $T_1=T$ or not.
Thus, we write
this decomposition
\[Pr\{\A\mid N,RM(\mu,p)\}=P_{0,T}(N)+P_{0,\bar T}(N)+P_{>0}(N),
\]
indexing by the number of recombinations (0 or more):  $P_{0,T}(N)$ gives the component for $\T=(T)$; $P_{0,\bar T}(N)$ gives the component for $\T=(T_1)$ but $T_1\neq T$; and $P_{>0}(N)$ gives the component for all sequences $\T$ consisting of more than one tree.

A similar decomposition may be written for the network $N'$.  In the rest of the proof, we find expressions for these components for each of $N$ and $N'$, and use Theorem~\ref{t:XLT.alignment.bounded} to obtain upper and lower bounds for them, eventually choosing a value of $p$ that forces a contradiction.

The first case, that $\T=(T)$, gives contribution
\begin{equation}\label{e:P_0T}
P_{0,T}(N)=\frac{(1-p)^{\omega(L-1)}}{2^\omega} Pr\{\A\mid T,M(\mu)\},
\end{equation}
obtained by putting $k=1$ and $\T=(T)$ in the statement of Lemma~\ref{l:thatte.lem6}.  
Note that the coefficient here is the probability that $T$ is chosen as the first tree in the sequence (namely $\frac{1}{2^\omega}$, since $|S(N)|=2^\omega$) and subsequently no further trees are added through the Markov process ($(1-p)^{\omega(L-1)}$).  It follows that if $T$ is not displayed by the network $N'$ (as we have assumed), this term is zero: 
\begin{equation}\label{e:P_0T.N'}
P_{0,T}(N')=0.
\end{equation}

Likewise, the case of the sequence containing a single tree $T'\neq T$ is obtained by putting $k=1$ and summing over $\T=(T')$ for $T'\neq T$:
\begin{equation}\label{e:P_barT}
P_{0,\bar T}(N)=\sum_{T'\neq T}\frac{(1-p)^{\omega(L-1)}}{2^\omega} Pr\{\A\mid T',M(\mu)\}.
\end{equation}

The component for the remaining cases, in which the sequence $\T$ has more than one tree, is given by 
\begin{align}\label{e:P_>0}
&P_{>0}(N)=\\
&\sum_{k>1}^L\left(\sum_{\substack{\mathbf T=(T_1,\dots,T_k)\\ T_i\neq T_{i+1}}}\left(\frac{p^{r(\T)}(1-p)^{s(\T)+\omega(L-k)}}{2^\omega}\sum_{(\ell_1,\dots,\ell_k)\vDash L}\left(\prod_{i=1}^kPr\left(\A[L_{i-1}+1,L_i]\mid T_i,M(\mu)\right)
\right)
\right)
\right).\notag
\end{align}

We now use Theorem~\ref{t:XLT.alignment.bounded} to obtain bounds 
for each of these probabilities on $N$ and $N'$, for the particular set of alignments $\A_T$ 
whose character distribution is close to that predicted on $T$ (see Eq~\eqref{eq:AT}).  
We find that the first, $P_{0,T}(N)$, can be bounded from below, and the others bounded above, 
for suitable choice of $L$ (depending on $T$).

Let $\U=S(N)\cup S(N')$.   For $L$ sufficiently large and $\epsilon_{max}=\max_{T\in\U}\{\epsilon_T\}$ as defined in Section~\ref{s:thatte.RPNs}, 
\begin{align*}
P_{0,T}(N)&=\frac{1}{2^\omega}{(1-p)^{\omega(L-1)}} Pr\{\A_T\mid T,M(\mu)\}&& \text{by Eq.~\eqref{e:P_0T}}\\
&>
\frac{1}{2^\omega}{(1-p)^{\omega(L-1)}} (1-\epsilon_{max}) && \text{by Theorem~\ref{t:XLT.alignment.bounded}.}
\end{align*}
We have already noted in Eq.~\eqref{e:P_0T.N'} that the corresponding term for $N'$ is zero: $P_{0,T}(N')=0$.
For the second component, we have 
\begin{align*}
P_{0,\bar T}(N)&=\sum_{T'\neq T}\frac{1}{2^\omega}{(1-p)^{\omega(L-1)}} Pr\{\A_T\mid T',M(\mu)\}&&\text{by Eq.~\eqref{e:P_barT}}\\
&< (1-p)^{\omega(L-1)} \epsilon_{max}&&  \text{by Theorem~\ref{t:XLT.alignment.bounded},}
\end{align*}
noting that there are $2^\omega-1$ trees other than $T$.  
The critical point here is that this inequality also holds for the network $N'$.  This holds firstly because the decomposition in Eq.~\eqref{e:P_barT} is independent of the network, and secondly, the same inequality given in Theorem~\ref{t:XLT.alignment.bounded} with respect to the set of alignments $\A_T$ holds for each of the trees $T'\neq T$, and $|S(N')|=2^\omega$.  That is,
\[
P_{0,\bar T}(N')<(1-p)^{\omega(L-1)} \epsilon_{\max}.
\]

Finally, since each choice of recombination event is an instance of a binomially distributed random variable, in which there are $\omega(L-1)$ possible instances of events, each with probability $p$ (and noting the probability of any alignment on one of these trees is less than 1), we have 
\begin{align*}
P_{>0}(N),\ P_{>0}(N')&\le \omega(L-1)p.
\end{align*}
This uses the fact that if $X\sim Bin(n,p)$, then $Pr\{X\ge k\}\le \binom{n}{k}p^k$ (for us, $k=1$).

Now the assumption of the Theorem statement, that the distributions of alignments are the same from each network, implies $Pr\{\A\mid N,RM(\mu,p)\}=Pr\{\A\mid N',RM(\mu,p)\}$ for each set of alignments $\A$, and in particular for $\A_T$.  Thus,
\[
P_{0,T}(N)+P_{0,\bar T}(N)+P_{>0}(N)=P_{0,T}(N')+P_{0,\bar T}(N')+P_{>0}(N'),
\]
and since $P_{0,T}(N')=0$ (with $T\not\in S(N')$), we have 
\begin{align}\label{e:P.main.ineq}
P_{0,T}(N)
&=(P_{0,\bar T}(N')+P_{>0}(N'))-(P_{0,\bar T}(N)+P_{>0}(N))\notag\\
&< P_{0,\bar T}(N')+P_{>0}(N')
\end{align}
since the term subtracted is strictly positive.

Recall the bounds we have established above:
\begin{align}\label{e:P.bounds.recalled}
P_{0,T}(N)&>\frac{(1-p)^{\omega(L-1)}}{2^\omega} (1-\epsilon_{\max})\\
P_{0,\bar T}(N')&< (1-p)^{\omega(L-1)} \epsilon_{\max}\notag\\
P_{>0}(N')&\le \omega(L-1)p.\notag
\end{align}
Taking logs of both sides of the inequalities in Eq.~\eqref{e:P.bounds.recalled}, we obtain
\begin{align}
\log (P_{0,T}(N))&>\omega(L-1)\log (1-p)-\omega\log 2 + \log (1-\epsilon_{\max})\label{e:log.P0T}\\
\intertext{and}
\log (P_{0,\bar T}(N')+P_{>0}(N'))&<\log (P_{0,\bar T}(N'))+\log (P_{>0}(N')) \notag\\
&< \omega(L-1)\log (1-p)+ \log \epsilon_{\max} + \log(\omega(L-1)p).\label{e:log.RHS}
\end{align}

Combining inequality~\eqref{e:P.main.ineq} with inequalities~\eqref{e:log.P0T} and~\eqref{e:log.RHS}, we have
\[
\omega(L-1)\log (1-p)-\omega\log 2 + \log (1-\epsilon_{\max})
<
\omega(L-1)\log (1-p)+ \log \epsilon_{\max} + \log(\omega(L-1)p),
\]
which simplifies to 
\begin{equation}\label{e:final.inequality}
\log (1-\epsilon_{\max})
<
\omega\log 2 + \log \epsilon_{\max} + \log(\omega(L-1)p).
\end{equation}

Of the terms in these expressions, $\omega$ (the number of reticulations) is fixed by $N$ and $N'$, $\epsilon_{\max}$ is fixed, and $L=L(T)$ is a fixed value dependent on $T$.  However, if $p$ is chosen to satisfy
\[p<\frac{1-\epsilon_{\max}}{2^\omega\epsilon_{\max}\omega(L-1)},
\]
(noting that this value allows a choice of $p$ between 0 and 1, as required by the theorem statement), then 
\[\log(\omega(L-1)p)<\log\left(\frac{1-\epsilon_{\max}}{2^\omega\epsilon_{\max}}\right)=\log(1-\epsilon_{\max})-\omega\log 2-\log\epsilon_{\max}.
\]
This contradicts Inequality~\eqref{e:final.inequality}, and therefore our initial assumption that $T\not\in S(N')$ must be false, proving the claim in the theorem.  Note also that this choice of $p$ depends on $L$, which we have chosen to satisfy Theorem~\ref{t:XLT.alignment.bounded}, and so is dependent on $T$.  Consequently $p=p(T)$ is a function of $T$, also, as claimed.
\end{proof}

\begin{cor}\label{first}
	Suppose $N$ and $N'$ are phylogenetic networks on $X$ with $\omega$ reticulation vertices, such that 
	every tree in $S(N)$ and $S(N')$ is an $X$-tree and $|S(N)|=|S(N')|=2^{\omega}$. 
	Then there exists $L\ge 1$ and $p\in(0,1)$ such that if 
	\[D(\Sigma^{XL}\mid N,RM(\mu,p))=D(\Sigma^{XL}\mid N',RM(\mu,p))\] 
	then $S(N) =  S(N')$. 
\end{cor}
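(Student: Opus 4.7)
The plan is to derive Corollary~\ref{first} from Theorem~\ref{t:main.result} by applying that theorem once for each embedded spanning tree in $S(N)\cup S(N')$ and then reconciling the parameters $L$ and $p$ produced by the different invocations. Since $|S(N)|=|S(N')|=2^\omega$, this is a finite collection of at most $2^{\omega+1}$ applications.

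First, note that the hypotheses on $N$ and $N'$ in Theorem~\ref{t:main.result} are symmetric in $N$ and $N'$, so the theorem holds equally with those roles swapped. For each $T\in S(N)$, Theorem~\ref{t:main.result} furnishes parameters $L_T\ge 1$ and $p_T\in(0,1)$ such that equality $D(\Sigma^{XL_T}\mid N,RM(\mu,p_T))=D(\Sigma^{XL_T}\mid N',RM(\mu,p_T))$ forces $T\in S(N')$. By the symmetric version, for each $T'\in S(N')$ we likewise obtain $L'_{T'}$ and $p'_{T'}$ such that equality of distributions at those parameters forces $T'\in S(N)$.

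The central step is to amalgamate these finite collections of parameters into a single pair $(L^*,p^*)$. Set $L^*:=\max\{L_T,L'_{T'}\}$, a maximum over a finite set. Inspecting the proof of Theorem~\ref{t:main.result}, one sees that $L$ enters only via Theorem~\ref{t:XLT.alignment.bounded} (which concerns concentration of the empirical frequency $f(A)$ around the theoretical distribution $p(T,\mu)$ and therefore holds for all sufficiently large $L$), and via the final upper bound on $p$, namely $p<\frac{1-\epsilon_{\max}}{2^\omega\epsilon_{\max}\omega(L-1)}$. Thus we may choose $p^*>0$ less than every $p_T$ and every $p'_{T'}$ and less than the recomputed bound at $L^*$; the resulting infimum is strictly positive because we are taking the minimum over a finite set, so $p^*\in(0,1)$. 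With this $(L^*,p^*)$ fixed, assume $D(\Sigma^{XL^*}\mid N,RM(\mu,p^*))=D(\Sigma^{XL^*}\mid N',RM(\mu,p^*))$. Applying Theorem~\ref{t:main.result} once for each $T\in S(N)$ gives $S(N)\subseteq S(N')$; applying the symmetric version once for each $T'\in S(N')$ gives $S(N')\subseteq S(N)$. Hence $S(N)=S(N')$.

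The main subtlety (rather than a true obstacle) is justifying that the conclusion of Theorem~\ref{t:main.result} is monotone in its parameters, so that a common pair $(L^*,p^*)$ actually witnesses each of the separate conclusions simultaneously. This reduces to two points: that Theorem~\ref{t:XLT.alignment.bounded} is really a ``for all sufficiently large $L$'' statement, which follows from concentration of empirical character frequencies as $L\to\infty$; and that the final contradiction in the proof of Theorem~\ref{t:main.result} is only strengthened by shrinking $p$ below the stated bound, which is immediate from the inequality $p<\frac{1-\epsilon_{\max}}{2^\omega\epsilon_{\max}\omega(L-1)}$. Once these are observed, the corollary follows from Theorem~\ref{t:main.result} in a direct, essentially bookkeeping, manner.
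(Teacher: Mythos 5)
Your proposal is correct and follows essentially the same route as the paper's own proof: apply Theorem~\ref{t:main.result} to each tree in $S(N)$ and (by symmetry) each tree in $S(N')$, then take the maximum of the resulting $L$'s and the minimum of the resulting $p$'s. The only difference is that you explicitly justify why a common pair $(L^*,p^*)$ still witnesses every individual conclusion (monotonicity of Theorem~\ref{t:XLT.alignment.bounded} in $L$ and of the final bound in $p$), a point the paper leaves implicit.
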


\begin{proof}
By Theorem~\ref{t:main.result}, there is an $L_1\ge 1$ and a $p_1\in(0,1)$ such that if 
\[D(\Sigma^{XL_1}\mid N,RM(\mu,p_1))=D(\Sigma^{XL_1}\mid N',RM(\mu,p_1))\] 
then $S(N) \subseteq S(N')$ (take $L_1$ to be the maximum over all $L(T)$, $T \in S(N)$ 
and $p_1$ to be the minimum $p_T$ with $T \in S(N)$). 

Likewise, there is an $L_2\ge 1$ and $p_2\in(0,1)$ such that if 
\[D(\Sigma^{XL_2}\mid N,RM(\mu,p_2))=D(\Sigma^{XL_2}\mid N',RM(\mu,p_2))\] 
then $S(N') \subseteq S(N)$. 
The result therefore follows by taking $L= \max\{L_1,L_2\}$ and
$p = \min\{p_1,p_2\}$.
\end{proof}	

We now state the main result of the paper.
We say that networks in a class $\mathcal C$ of phylogenetic networks are 
\emph{identifiable} under model $RM(\mu,p)$ if all pairs of networks in $\mathcal C$ 
are distinguished from each other under model $RM(\mu,p)$.

\begin{cor}\label{second}
	The class of tree-child networks on $X$ for which 
	the root does not form an arc with a reticulation vertex in the network, and
	such that every network in the class has the same number of reticulation vertices, is
	identifiable under model $RM(\mu,p)$. 
\end{cor}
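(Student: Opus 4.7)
The plan is to assemble Corollary~\ref{second} as a direct synthesis of the combinatorial structure theorems from Section~\ref{s:tcn} with the probabilistic identifiability result Corollary~\ref{first}, arguing via contrapositive. I would fix the class $\mathcal{C}$ as described and take an arbitrary pair $N,N'\in\mathcal{C}$. I first need to check that this pair falls under the hypotheses of Corollary~\ref{first}: since $N,N'$ are tree-child and have no arc from the root to a reticulation, Lemma~\ref{special} applies and guarantees that every element of $S(N)$ and of $S(N')$ is an $X$-tree. Since $N,N'$ are tree-child, Theorem~\ref{tight} applies and yields $|S(N)|=2^{r(N)}$ and $|S(N')|=2^{r(N')}$; the assumption that all networks in $\mathcal{C}$ share the same number of reticulations $\omega$ then gives $|S(N)|=|S(N')|=2^\omega$. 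All hypotheses of Corollary~\ref{first} are thus satisfied.

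Now I argue by contrapositive. Assume that $N$ and $N'$ are not isomorphic. By Theorem~\ref{isomorphic}, this forces $S(N)\neq S(N')$. The contrapositive of Corollary~\ref{first} then provides $L\geq 1$ and $p\in(0,1)$ for which
\[D(\Sigma^{XL}\mid N,RM(\mu,p))\neq D(\Sigma^{XL}\mid N',RM(\mu,p)),\]
meaning there is some alignment set $\mathcal{A}\subseteq \Sigma^{XL}$ whose probabilities under the two networks differ. By the definition preceding the corollary, this is precisely the statement that $N$ and $N'$ are distinguished under $RM(\mu,p)$. Since the pair was arbitrary, every pair of non-isomorphic networks in $\mathcal{C}$ is distinguished, which is the definition of identifiability.

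Essentially no step is a genuine obstacle: each of Lemma~\ref{special}, Theorem~\ref{tight}, Theorem~\ref{isomorphic}, and Corollary~\ref{first} contributes one clause of the corollary's hypothesis, and the only real care is to check the book-keeping — namely that the two conditions imposed on $\mathcal{C}$ (no root-reticulation arc, common reticulation number) are exactly what is needed to feed the combinatorial conclusion $|S(N)|=|S(N')|=2^\omega$ with all trees being $X$-trees into the probabilistic statement. The mild subtlety worth flagging is that $L$ and $p$ in the conclusion depend on the chosen pair $(N,N')$ (indeed on the specific spanning tree witnessing $S(N)\neq S(N')$), but the notion of identifiability only requires the existence of some such $L$ for each pair, so this dependence is harmless.
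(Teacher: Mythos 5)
Your proposal is correct and follows essentially the same route as the paper's own proof: Lemma~\ref{special} and Theorem~\ref{tight} verify the hypotheses of Corollary~\ref{first}, and Theorem~\ref{isomorphic} converts equality of spanning-tree sets into isomorphism; the paper phrases the final step as ``equal distributions imply isomorphic'' while you take the contrapositive, but these are logically identical. Your closing remark about the pair-dependence of $L$ and $p$ being harmless is a fair observation that the paper leaves implicit.
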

\begin{proof}
  We need to show that if $N$ and $N'$ are in the given class with $N$ not isomorphic to $N'$, then 
  $D(\Sigma^{XL} \,|\, N, RM(p,\mu)) \neq D(\Sigma^{XL} \,|\, N', RM(p,\mu))$ for some $L\ge 1$.

	Suppose that $N$ and $N'$ are networks in the given class.
	As in both $N,N'$ the root does not form an arc with a reticulation vertex in the network,
	by Lemma~\ref{special} it follows that every tree in $S(N)$ and $S(N')$ is an $X$-tree. Moreover, 
	by Theorem~\ref{tight} we have $|S(N)|=|S(N')|=2^{\omega}$, where $\omega$ is
	the number of reticulation vertices in both $N$ and $N'$. Hence, by Corollary~\ref{first}
	there exists some $L'$ such that if $D(\Sigma^{XL'} \,|\, N, RM(\mu,p)) = D(\Sigma^{XL'} \,|\, N', RM(\mu,p))$
	then $S(N)=S(N')$,  which also implies that $N$ and $N'$ 
	are isomorphic by Theorem~\ref{isomorphic}. 
	Hence if $N$ and $N'$ are \emph{not} isomorphic, then 
	$D(\Sigma^{XL'} \,|\, N, RM(\mu,p)) \neq D(\Sigma^{XL'} \,|\, N', RM(\mu,p))$ for $L'$, as required.  
\end{proof}

\section{Discussion}

In this paper, we have shown that we can identify a certain subclass of tree-child networks 
under the model $RM(\mu,p)$. It would be interesting to see if this 
could be extended to the class of all tree-child networks, or 
to other classes of networks. We 
note that out model was defined for networks all of whose embedded spanning trees are 
$X$-trees; for more general networks this may not be the case, but 
this can probably be adjusted for using techniques developed for pedigrees in 
\cite{thatte2013reconstructing} (although probably at the expense of requiring
more technical arguments). In another direction, it could be worth  investigating what happens
when the model $RM(\mu,p)$ is extended to allow independent probabilities 
at each recombination vertex (instead of setting them all equal to $p$). 

Many of the questions raised in \cite[Section 6]{thatte2013reconstructing} for pedigrees have 
natural analogues for networks. For example, Corollary~\ref{first}
tells us that if $N$ and $N'$ are phylogenetic networks on the same leaf-set
that satisfy certain conditions and induce the same distributions, then 
$S(N)=S(N')$. But is it possible to prove some type of converse for 
this statement? Moreover, in practice it could be 
computationally expensive to check the condition $S(N)=S(N')$, and 
so the question arises as whether or not there are
there are possibly more tractable combinatorial conditions for checking when two networks can be
distinguished relative to model $RM(\mu,p)$?

Finally, it would be interesting to see if model $RM(\mu,p)$ might 
provide useful  information in addition to purely combinatorial invariants for
identifying networks. For example, in \cite{huber2014much} it is shown that 
certain pairs of phylogenetic networks cannot be distinguished from 
one another even by comparing all of the possible subtrees and networks that they display. It
would be interesting to know if they can however be distinguished 
under model $RM(\mu,p)$. 

\section*{Acknowledgements}
The authors thank the Royal Society for its support for this collaboration via an International Exchanges award.


\begin{thebibliography}{10}

\bibitem{bapteste2013networks}
Eric Bapteste, Leo van Iersel, Axel Janke, Scot Kelchner, Steven Kelk, James~O
  McInerney, David~A Morrison, Luay Nakhleh, Mike Steel, Leen Stougie, et~al.
\newblock Networks: expanding evolutionary thinking.
\newblock {\em Trends in Genetics}, 29(8):439--441, 2013.

\bibitem{bordewich2016determining}
Magnus Bordewich and Charles Semple.
\newblock Determining phylogenetic networks from inter-taxa distances.
\newblock {\em Journal of Mathematical Biology}, 73(2):283--303, 2016.

\bibitem{cardona2009comparison}
Gabriel Cardona, Francesc Rossello, and Gabriel Valiente.
\newblock Comparison of tree-child phylogenetic networks.
\newblock {\em IEEE/ACM Transactions on Computational Biology and
  Bioinformatics (TCBB)}, 6(4):552--569, 2009.

\bibitem{chang1996full}
Joseph~T Chang.
\newblock Full reconstruction of {M}arkov models on evolutionary trees:
  identifiability and consistency.
\newblock {\em Mathematical Biosciences}, 137(1):51--73, 1996.

\bibitem{felsenstein2004inferring}
Joseph Felsenstein.
\newblock {\em Inferring phylogenies}, volume~2.
\newblock Sinauer associates Sunderland, MA, 2004.

\bibitem{gross2017distinguishing}
Elizabeth Gross and Colby Long.
\newblock Distinguishing phylogenetic networks.
\newblock {\em arXiv preprint arXiv:1706.03060}, 2017.

\bibitem{gusfield2014recombinatorics}
Dan Gusfield.
\newblock {\em ReCombinatorics: the algorithmics of ancestral recombination
  graphs and explicit phylogenetic networks}.
\newblock MIT Press, 2014.

\bibitem{huber2014much}
Katharina~T Huber, Leo Van~Iersel, Vincent Moulton, and Taoyang Wu.
\newblock How much information is needed to infer reticulate evolutionary
  histories?
\newblock {\em Systematic Biology}, 64(1):102--111, 2014.

\bibitem{huson2010phylogenetic}
Daniel~H Huson, Regula Rupp, and Celine Scornavacca.
\newblock {\em Phylogenetic networks: concepts, algorithms and applications}.
\newblock Cambridge University Press, 2010.

\bibitem{nakhleh2010evolutionary}
Luay Nakhleh.
\newblock Evolutionary phylogenetic networks: models and issues.
\newblock {\em The Problem Solving Handbook for Computational Biology and
  Bioinformatics}, pages 125--158, 2010.

\bibitem{rhodes2012identifiability}
John~A Rhodes and Seth Sullivant.
\newblock Identifiability of large phylogenetic mixture models.
\newblock {\em Bulletin of Mathematical Biology}, 74(1):212--231, 2012.

\bibitem{song2005constructing}
Yun~S Song and Jotun Hein.
\newblock Constructing minimal ancestral recombination graphs.
\newblock {\em Journal of Computational Biology}, 12(2):147--169, 2005.

\bibitem{steel1998reconstructing}
Mike Steel, Michael~D Hendy, and David Penny.
\newblock Reconstructing phylogenies from nucleotide pattern probabilities: a
  survey and some new results.
\newblock {\em Discrete Applied Mathematics}, 88(1-3):367--396, 1998.

\bibitem{suchard2003inferring}
Marc~A Suchard, Robert~E Weiss, Karin~S Dorman, and Janet~S Sinsheimer.
\newblock Inferring spatial phylogenetic variation along nucleotide sequences:
  a multiple changepoint model.
\newblock {\em Journal of the American Statistical Association},
  98(462):427--437, 2003.

\bibitem{thatte2013reconstructing}
Bhalchandra~D Thatte.
\newblock Reconstructing pedigrees: some identifiability questions for a
  recombination-mutation model.
\newblock {\em Journal of Mathematical Biology}, pages 1--38, 2013.

\bibitem{yu2014maximum}
Yun Yu, Jianrong Dong, Kevin~J Liu, and Luay Nakhleh.
\newblock Maximum likelihood inference of reticulate evolutionary histories.
\newblock {\em Proceedings of the National Academy of Sciences},
  111(46):16448--16453, 2014.

\end{thebibliography}

\end{document}